\newcommand{\emp}{\varepsilon}
\newcommand{\abs}[1]{\left\vert #1 \right\vert}
\newcommand{\gen}[1]{\left\langle #1 \right\rangle}
\DeclareMathOperator{\lists}{\tt lists}
\DeclareMathOperator{\concat}{\tt concat}
\DeclareMathOperator{\hd}{\tt hd}
\DeclareMathOperator{\last}{\tt last}
\DeclareMathOperator{\R}{\mathcal R}
\newcommand{\ws}{\mathbf w}
\newcommand{\ps}{\mathbf p}
\newcommand{\us}{\mathbf u}
\newcommand{\vs}{\mathbf v}
\newcommand{\zs}{\mathbf z}
\newcommand{\isakw}[1]{\isakeyword{#1}}
\newtcolorbox{isaframe}[1][]
   { blanker, 
    left=3mm, right=3mm, top=1mm, bottom=1mm,
     borderline west={1pt}{0pt}{blue},
     before upper=\setlength{\parindent}{0pt},
     parbox=true, #1}
\title{Binary codes that do not preserve primitivity}
\author{Štěpán Holub\inst{1} \orcidID{0000-0002-6169-5139}
Martin Raška\inst{1} \orcidID{0000-0002-4414-4538} \and
Štěpán Starosta\inst{2}\orcidID{0000-0001-5962-4297}
}
\authorrunning{M. Raška, Š. Starosta}
\institute{Faculty of Mathematics and Physics, Charles University, Czech Republic\\
\and
Faculty of Information Technology, Czech Technical University in Prague, Czech Republic\\
}
\authorrunning{\v S. Holub and M. Ra\v ska and \v S. Starosta}
\begin{document}

\maketitle

\begin{abstract}
A code $X$ is not primitivity preserving if there is a primitive list $\ws \in \lists X$ whose concatenation is imprimitive. We formalize a full characterization of such codes in the binary case in the proof assistant Isabelle/HOL. Part of the formalization, interesting on its own, is a description of $\{x,y\}$-interpretations of the square $xx$ if $\abs y \leq \abs x$. We also provide a formalized parametric solution of the related equation $x^jy^k = z^\ell$.
\end{abstract}

\section{Introduction}
Consider two words ${\tt abba}$ and ${\tt b}$. It is possible to concatenate (several copies of) them as ${\tt b}\cdot {\tt abba} \cdot {\tt b}$, and obtain a power of a third word, namely a square ${\tt bab}\cdot {\tt bab}$ of ${\tt bab}$.
In this paper, we completely describe all ways how this can happen for two words, and formalize it in Isabelle/HOL. 

The corresponding theory has a long history. The question can be formulated as solving equations in three variables of the special form $W(x,y) = z^\ell$ where the left hand side is a sequence of $x$'s and $y$'s, and $\ell \geq 2$. 
The seminal result in this direction is the paper by R. C. Lyndon and M.-P. Sch\"utzen\-ber\-ger \cite{lyndon1962} from 1962, which solves in a more general setting of free groups the equation $x^jy^k = z^\ell$ with $2 \leq j,k,\ell$. It was followed, in 1967, by a partial answer to our question by A. Lentin and M.-P. Sch\"{u}tzenberger \cite{lentin}.
 Complete characterization of monoids generated by three words was provided by L. G. Budkina and Al. A. Markov in 1973 \cite{budkina}. The characterization was later, in 1976, reproved in a different way by Lentin's student J.-P. Spehner in his Ph.D. thesis \cite{spehner}, which even explicitly mentions the answer to the present question. See also a comparison of the two classifications by T. Harju and D. Nowotka \cite{terodirk}. In 1985, the result was again reproved by E. Barbin-Le Rest and M. Le Rest \cite{lerest}, this time specifically focusing on our question. Their paper contains a characterization of binary interpretations of a square as a crucial tool. The latter combinatorial result is interesting on its own, but is very little known. In addition to the fact that, as far as we know, the proof is not available in English, it has to be reconstructed from Th\'eor\`eme 2.1 and Lemme 3.1 in \cite{lerest}, it is long, technical and little structured, with many intuitive steps that have to be clarified. It is symptomatic, for example, that Ma\v nuch \cite{Manuch} cites the claim as essentially equivalent to his desired result but nevertheless provides a different, shorter but similarly technical proof.

This complicated history makes the topic a perfect candidate for formalization. The proof we present here naturally contains some ideas of the proof from \cite{lerest} but is significantly different. Our main objective was to follow the basic methodological requirement of a good formalization, namely to identify claims that are needed in the proof and formulate them as separate lemmas and as generally as possible so that they can be reused not only in the proof but also later. 
Moreover, the formalization naturally forced us to consider carefully the overall strategy of the proof (which is rather lost behind technical details of published works on this topic). Under Isabelle's pressure we eventually arrived at a~hopefully clear proof structure which includes a simple, but probably innovative use of the idea of ``gluing'' words. 
The analysis of the proof is therefore another, and we believe the most important contribution of our formalization, in addition to the mere certainty that there are no gaps in the proof.

In addition, we provide a complete  parametric solution of the equation $x^ky^j = z^\ell$ for arbitrary $j$, $k$ and $\ell$, a classification which is not very difficult, but maybe 
too complicated to be useful in a mere unverified paper form. 

The formalization presented here is an organic part of a larger project of formalization of combinatorics of words (see an introductory description in \cite{itp2021}). We are not aware of a similar formalization project in any proof assistant. 
The existence of the underlying library, which in turn extends the theories of ``List'' and ``HOL-Library.Sublist'' from the standard Isabelle distribution, critically contributes to a smooth formalization which is getting fairly close to the way a~human paper proof would look like, outsourcing technicalities to the (reusable) background. We accompany claims in this text with names of their formalized counterparts. 

\section{Basic facts and notation}\label{sec:notation}
Let $\Sigma$ be an arbitrary set.
Lists (i.e. finite sequences) $[x_1,x_2,\dots,x_n]$ of elements $x_i \in \Sigma$ are called \emph{words} over $\Sigma$.  
The set of all words over $\Sigma$ is usually denoted as $\Sigma^*$, using the Kleene star. A notorious ambivalence of this notation is related to the situation when we consider a set of words $X \subset \Sigma^*$, and are interested in lists over $X$. They should be denoted as elements of $X^*$. However, $X^*$ usually means something else (in the theory of rational languages), namely the set of all words in $\Sigma^*$ generated by the set $X$. To avoid the confusion, we will therefore follow the notation familiar from the formalization in Isabelle, and write $\lists X$ instead, to make clear that the entries of an element of $\lists X$ are themselves words. In order to further help to distinguish words over the basic alphabet from lists over a set of words, we shall use boldface variables for the latter. In particular, it is important to keep in mind the difference between a letter $a$ and the word $[a]$ of length one, the distinction which is usually glossed over lightly in the literature on combinatorics on words. The set of words over $\Sigma$ generated by $X$ is then denoted as $\gen X$. 
The (associative) binary operation of concatenation of two words $u$ and $v$ is denoted by $u \cdot v$ We prefer this algebraic notation to the Isabelle's original \verb|@|. Moreover, we shall often omit the dot as usual. If~$\us = [x_1,x_2,\ldots, x_n] \in \lists X$ is a list of words, then we write $\concat \us$ for $x_1\cdot x_2 \cdots x_n$.
We write $\varepsilon$ for the empty list, and $u^k$ for the concatenation of $k$ copies of $u$ (we use $u^{\verb|@|}k$ in the formalization). We  write $u \leq_p v$, $u <_p v$, $u \leq_s v$, $u <_s v$, and $u \leq_f v$ to denote that $u$ is a \emph{prefix}, a \emph{strict prefix}, \emph{suffix}, \emph{strict suffix} and \emph{factor} (that is, a contiguous sublist) respectively. 
A word is \emph{primitive} if it is nonempty and not a power of a shorter word.
Otherwise, we call it \emph{imprimitive}.
Each nonempty word $w$ is a power of a unique primitive word $\rho\, w$, its \emph{primitive root}.
A nonempty word $r$ is a \emph{periodic root} of a word $w$ if $w \leq_p r \cdot w$. This is equivalent to $w$ being a prefix of the right infinite power of $r$, denoted $r^\omega$. Note that we deal with finite words only, and we use the notation $r^\omega$ only as a convenient shortcut for ``a sufficiently long power of $r$''. 
Two words $u$ and $v$ are \emph{conjugate}, we write $u \sim v$, if $u = rq$ and $v=qr$ for some words $r$ and $q$. Note that conjugation is an equivalence whose classes are also called \emph{cyclic words}. A word $u$ is \emph{cyclic factor} of $w$ if it is a factor of some conjugate of $w$.  A set of words $X$ is a \emph{code} if its elements do not satisfy any nontrivial relation, that is, they are a basis of a free semigroup.  
For two element set $\{x,y\}$, this is equivalent to $xy\neq yx$, and/or to $\rho\, x \neq \rho\, y$.
An important characterization of a semigroup $S$ of words to be free is the \emph{stability condition} which is the implication $u,v,uz,zv \in S \Longrightarrow z \in S$. 
The longest common prefix of $u$ and $v$ is denoted by $u \wedge_p v$. If $\{x,y\}$ is a (binary) code, then $(x \cdot w) \wedge_p (y \cdot w') = xy\wedge_p yx$
for any $w,w'\in \gen{\{x,y\}}$ sufficiently long. We explain some elementary facts from combinatorics on words in more detail in the Appendix~\ref{sec:appendix}.

\section{Main theorem}
Let us introduce the central definition of the paper.
\begin{definition}
	We say that a set $X$ of words is \emph{primitivity preserving} if there is no word $\ws \in \lists X$ such that
	\begin{itemize}
		\item $\abs \ws \geq 2$;
		\item $\ws$ is primitive; and
		\item $\concat \ws$ is imprimitive.
	\end{itemize} 
\end{definition}	

Note that our definition does not take into account singletons $\ws = [x]$. In particular, $X$ can be primitivity preserving even if some $x \in X$ is imprimitive. Nevertheless, in the binary case, we will also provide some information about the cases when one or both elements of the code have to be primitive.

In \cite{Mitrana1997}, V. Mitrana formulates the primitivity of a set in terms of morphisms, and shows that $X$ is primitivity preserving if and only if it is the minimal set of generators of a ``pure monoid'', cf. \cite[p. 276]{BerstelCodes}. This brings about a wider concept of morphisms preserving a given property, most classically squarefreeness, see for example a characterization of squarefree morphisms over three letters by M. Crochemore \cite{Crochemore1982}.

The target claim of our formalization is the following characterization of words witnessing that a binary code is not primitivity preserving:

\begin{theorem}[\texttt{bin\_imprim\_code}] \label{th:Theorem1} Let $B = \{x,y\}$ be a code that is not primitivity preserving.
		Then there are integers $j \geq 1$ and $k \geq 1$, with $k = 1$ or $j = 1$, such that the following conditions are equivalent for any
		$\ws \in \lists B$ with $\abs {\ws} \geq 2$:  
		\begin{itemize}
		\item  $\ws$ is primitive, and $\concat \ws$ is imprimitive
		\item  $\ws$ is conjugate with $[x]^j[y]^k$.
		\end{itemize}
   Moreover, assuming $\abs y \leq \abs x$,
   \begin{itemize}
   	\item if $j \geq 2$, then $j=2$ and $k=1$, and both $x$ and $y$ are primitive;
   	\item if $k \geq 2$, then $j=1$ and $x$ is primitive.
   \end{itemize}
\end{theorem}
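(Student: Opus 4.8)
The plan is to recast the two displayed conditions as a statement about the \emph{profile} of $\ws$. Writing $\ws$ as a word $P$ over the two-symbol alphabet $\{X,Y\}$ and letting $h$ be the morphism $X\mapsto x$, $Y\mapsto y$ (injective because $\{x,y\}$ is a code), the list $\ws$ is primitive exactly when $P$ is primitive, and $\concat\ws=h(P)$. First I would clear away the degenerate arrangements: a primitive $P$ of length at least $2$ must contain both $X$ and $Y$, since $[x]^m$ and $[y]^m$ are imprimitive for $m\ge2$. The hypothesis then reads $h(P)=z^\ell$ with $\ell\ge2$, where $z$ is the primitive root of $\concat\ws$. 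Conjugating $\ws$ conjugates $\concat\ws$ and preserves both the primitivity of the list and the imprimitivity of its concatenation, so I may rotate $P$ at will; and since the whole equivalence is symmetric under exchanging $x$ and $y$, I assume $\abs y\le\abs x$ throughout.

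The heart of the argument is to prove that a witness necklace has its $X$'s in a single block and its $Y$'s in a single block, i.e.\ $P\sim X^jY^k$ for some $j,k\ge1$; equivalently, the two cyclic descents $XY$ and $YX$ each occur once. To get this I would compare the two factorizations of $z^\ell=\concat\ws$: the \emph{periodic} one into $\ell$ copies of $z$, and the \emph{code} one into copies of $x$ and $y$. A boundary of the periodic factorization that sweeps across an occurrence of the square $xx$ exhibits an $\{x,y\}$-interpretation of $xx$, and here I would invoke the classification of such interpretations---valid precisely in the regime $\abs y\le\abs x$---to pin the period boundaries rigidly against the $x$-boundaries. Feeding this rigidity into a \emph{gluing} step, in which adjacent code letters forced to align are merged into a single block, produces a strictly shorter instance of the same configuration; an induction on $\abs{\concat\ws}$ (equivalently, a minimal-counterexample argument) then shows that a second block of either letter would propagate the forced alignment around the whole necklace and entail $xy=yx$, contradicting that $\{x,y\}$ is a code. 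This yields $P\sim X^jY^k$.

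With the sorted form in hand, $\concat\ws$ is conjugate to $x^jy^k$, so $x^jy^k$ is a proper power. By Lyndon--Sch\"utzenberger, $j,k,\ell\ge2$ would force $xy=yx$, so one of $j,k$ equals $1$. To determine the exponents I would feed $x^jy^k=z^\ell$ into the parametric solution of this equation: for a non-commuting pair it fixes $(j,k)$ uniquely and delivers the ``Moreover'' clauses, namely that under $\abs y\le\abs x$ a solution with $j\ge2$ forces $j=2$, $k=1$ and both $x,y$ primitive, while a solution with $k\ge2$ forces $j=1$ and $x$ primitive. This uniqueness of $(j,k)$ is exactly what lets a single pair work for \emph{every} $\ws$ in the statement. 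Fixing $(j,k)$ from one witness (which exists because $B$ is not primitivity preserving), the forward implication is the crux above together with uniqueness, and the backward implication is immediate: any conjugate of the proper power $x^jy^k$ is imprimitive, and $[x]^j[y]^k$ is a primitive list because its single minority letter (recall $k=1$ or $j=1$) prevents it from being a proper power.

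The step I expect to be the main obstacle is the crux of the second paragraph: the classification of $\{x,y\}$-interpretations of $xx$ together with the gluing/descent that collapses an arbitrary primitive arrangement to the single-block form. This is where the length and delicacy of the classical arguments reside and where the hypothesis $\abs y\le\abs x$ is genuinely needed; by contrast, the reductions of the first paragraph are routine bookkeeping about conjugacy and primitivity, and the final paragraph is an appeal to Lyndon--Sch\"utzenberger and to the parametric solution of $x^jy^k=z^\ell$.
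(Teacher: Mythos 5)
Your outline is, in its large-scale architecture, the paper's own route: the $z$-shift across an occurrence of $x\cdot x$ yielding a disjoint extendable $\{x,y\}$-interpretation of the square, the classification of such interpretations as the technical core, a gluing-plus-induction argument to collapse arbitrary witnesses, and a final appeal to Lyndon--Sch\"utzenberger and the parametric solution of $x^jy^k=z^\ell$. But there is one genuine gap: you claim that for a non-commuting pair the parametric solution ``fixes $(j,k)$ uniquely.'' It does not. The parametric theorem classifies, for each \emph{fixed} exponent pattern $(j,k,\ell)$, the shape of the solutions; nothing in its statement prevents a single code $\{x,y\}$ from admitting imprimitive $x^jy^k$ for two different pairs $(j,k)$ --- for instance both $xy$ and $xy^2$ imprimitive, where the first would fall under the generic case $j=k=1$ and the second under $j=1$, $k\geq 2$. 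The paper proves uniqueness as a separate lemma (\texttt{LS\_unique}), and its proof needs more than the parametric classification: after Lyndon--Sch\"utzenberger excludes $k'-k\geq 2$, the same-$j$ case rests on the Periodicity-lemma fact that if $uv$ and $uvv$ are both imprimitive then $u$ and $v$ commute (\texttt{imprim\_ext\_suf\_comm}), and the mixed case $k=j'=1$, $j,k'\geq 2$ combines the case-C parametrization with a synchronization argument showing $rqrq$ cannot be a factor of a power of $qrrq$ unless $r$ and $q$ commute. Without this uniqueness, your single chosen pair $(j,k)$ is not known to serve \emph{every} $\ws$, so the ``equivalent for any $\ws$'' clause of the theorem remains unproven.

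A second, smaller mismatch sits in your crux paragraph. The classification of disjoint extendable interpretations of $x\cdot x$ needs not only $\abs y \leq \abs x$ but also that $x$ and $y$ are \emph{primitive} and \emph{non-conjugate}, and your plan never discharges these hypotheses. In the paper this is precisely the job of the reduction you do not mention: pass to primitive roots via the morphism $[x]\mapsto[\rho\,x]^{e_x}$, $[y]\mapsto[\rho\,y]^{e_y}$ (which preserves primitivity of the list but may destroy $\abs y\leq\abs x$), prove that $\ws$ cannot contain both $[x,x]$ and $[y,y]$ as cyclic factors, and only then glue. Gluing there is not a merging of letters ``forced to align'' by the interpretation; it is the operation that absorbs blocks of the square-free letter, shortening $\ws$ inside a new binary code $\{x'y',y'\}$ so that induction applies, with the contradiction in the induction step again supplied by the $uv$/$uvv$ lemma (for $u=x'y'x'$, $v=y'$). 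Relatedly, the parametric solution does not by itself hand you ``both $x$ and $y$ primitive when $j=2$'': the form $x=(rq)^mr$ requires a further (short) Fine--Wilf argument, and in the paper the primitivity of $x$ is instead extracted from the main theorem on witnesses containing two $x$'s.
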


\begin{proof}
Let $\ws$ be a word witnessing that $B$ is not primitivity preserving. That is, $\abs{\ws} \geq 2$, $\ws$ is primitive, and $\concat \ws$ is imprimitive. Since $[x]^j[y]^k$ and $[y]^k[x]^j$ are conjugate, we can suppose, without loss of generality, that $\abs y \leq \abs x$.

 First, we want to show that $\ws$ is conjugate with $[x]^j[y]^k$ for some $j,k \geq 1$ such that $k = 1$ or $j = 1$.
   Since $\ws$ is primitive and of length at least two, it contains both $x$ and $y$.  If it contains one of these letters exactly once, then the desired from $[x]^j [y]^k$ up to conjugation is guaranteed, with $j = 1$ or $k = 1$.
Therefore, the difficult part is to show that no primitive $\ws$ with $\concat \ws$ imprimitive can contain both letters at least twice.
This is the main task of the rest of the paper, which is finally accomplished by Theorem \ref{th:main} claiming that words that contain at least two occurrences of $x$ are conjugate with $[x,x,y]$.
To complete the proof of the first part of the theorem, it remains to show that $j$ and $k$ do not depend on $\ws$.
This follows from Lemma~\ref{le:jk_unique}.

Note that the imprimitivity of $\concat \ws$ induces the equality $x^jy^k = z^\ell$ for some $z$ and $\ell \geq 2$. The already mentioned seminal result of Lyndon and Sch\"utzenberger shows that $j$ and $k$ cannot be simultaneously at least two, since otherwise $x$ and $y$ commute. For the same reason, considering its primitive root, the word $y$ is primitive if $j \geq 2$.
Similarly, $x$ is primitive if $k \geq 2$. 
The primitivity of $x$ when $j = 2$ is a part of Theorem \ref{th:main}.  \qed
\end{proof}

We start by giving a complete parametric solution of the equation $x^jy^k = z^\ell$ in the following theorem.
This will eventually yield, after the proof of Theorem \ref{th:Theorem1} is completed, a full description of not primitivity preserving binary codes. Since the equation is mirror symmetric, we omit symmetric cases by assuming  $|y| \leq |x|$.

\newcommand*{\thead}[1]{\multicolumn{1}{c|}{\bfseries #1}}
\newcommand*{\theadlast}[1]{\multicolumn{1}{c}{\bfseries #1}}

\begin{theorem}[\texttt{LS\_parametric\_solution}] \label{thm:xjykzl_solution}
	Let $\ell \geq 2$, $j,k \geq 1$ and $|y| \leq |x|$.

	The equality $x^jy^k = z^\ell$ holds if and only if one of the following cases takes place:


	\begin{enumerate}[A.] \setlength\itemsep{0.5em}
	\item \label{A} There exists a word $r$, and integers $m,n,t \geq 0$ such that  $$mj+nk = t \ell ,$$ and
	   \begin{align*}
		x = r^m, \quad y = r^n, \quad z = r^t; 
	   \end{align*}
	\item \label{B}  $j = k = 1$ and there exist non-commuting words $r$ and $q$, 
		and integers $m,n \geq 0$ such that $$m+n+1 = \ell ,$$ and
		$$x = (rq)^mr, \quad y = q(rq)^{n}, \quad z = rq;$$ 
	\item \label{C}  $j = \ell = 2$, $k = 1$ and there exist non-commuting words
		$r$ and $q$ and an integer $m \geq 2$ such that 
		$$x = (rq)^m r, \quad y = qrrq, \quad z = (rq)^mrrq;$$
	\item  \label{D}  $j = 1$ and $k \geq 2$ and there exist  non-commuting words
	$r$ and $q$ such that \\
		$$x = (qr^k)^{\ell-1}q, \quad y = r, \quad z = qr^k;$$
	\item \label{E}  $j = 1$ and  $k \geq 2$ and there are non-commuting words
	$r$ and $q$, an integer $m \geq 1$ such that 
		$$x = (qr(r(qr)^m)^{k - 1})^{\ell - 2}qr(r(qr)^m)^{k - 2}rq, \quad y = r(qr)^m, \quad z = qr(r(qr)^m)^{k - 1}.$$
	\end{enumerate}


	
\end{theorem}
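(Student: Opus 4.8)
The plan is to prove the two implications separately. The \emph{if} direction I would dispatch by direct verification: substituting the stated words into $x^jy^k$ and reducing in the free monoid returns $z^\ell$ in each of the five cases, so the whole content sits in the \emph{only if} direction. Assuming $x^jy^k=z^\ell$, I would first split according to whether $x$ and $y$ commute. If they do, then $x=r^m$ and $y=r^n$ for a common primitive $r$ (or all words are empty); since $z^\ell=r^{mj+nk}$ is then a power of the primitive word $r$, also $z=r^t$ with $mj+nk=t\ell$, which is exactly case~\ref{A}. From then on I assume $x$ and $y$ do not commute. By the theorem of Lyndon and Sch\"utzenberger~\cite{lyndon1962}, $x^jy^k=z^\ell$ with $j,k,\ell\geq 2$ would force $x$ and $y$ to commute; since $\ell\geq 2$, this already yields $\min(j,k)=1$, and the argument then branches on which exponent is $1$.

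If $j=k=1$ the equation is $xy=z^\ell$. Writing $\abs x=m\abs z+s$ with $0\leq s<\abs z$ and splitting $z=rq$ at position $s$ gives $x=(rq)^m r$ and $y=q(rq)^n$ with $m+n+1=\ell$; non-commutativity of $x,y$ forces $r,q$ not to commute, which is case~\ref{B}. Next I take $k=1$ and $j\geq 2$, so $x^jy=z^\ell$. Here the prefix $x^j$ of $z^\ell$ carries the period $\abs z$ (it lies in $z^\omega$) as well as the period $\abs x$, and whenever $\abs{x^j}\geq\abs x+\abs z$ the Fine and Wilf theorem makes $x$ and $z$, hence also $y$, powers of a common word, contradicting non-commutativity. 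This length bound holds both when $\abs z\leq\abs x$ and when $j\geq 3$, so the only survivor is $\abs z>\abs x$ together with $j=2$; the length identity $\ell\abs z=2\abs x+\abs y$ combined with $\abs y\leq\abs x<\abs z$ then forces $\ell=2$. Solving the residual equation $x^2y=z^2$ by peeling along the overlaps $z=xu=vy$ produces the parametrization of case~\ref{C}, the bound $m\geq 2$ being precisely what $\abs y\leq\abs x$ demands.

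Finally I take $j=1$ and $k\geq 2$, so $xy^k=z^\ell$; here the hypothesis $\abs y\leq\abs x$ destroys the mirror symmetry and is what splits the answer into two families. I would proceed by induction on $\ell$, peeling a copy of $z$ from the front while $\abs x\geq\abs z$: from $z\leq_p x$ one writes $x=zx'$ and obtains $z^{\ell-1}=x'y^k$, the same shape with smaller $\ell$. The descent stops at a remainder shorter than $z$, where I claim the exponent is forced to be $1$ or $2$. An exponent $\ell'\geq 3$ with $\abs{x'}<\abs z$ is impossible: writing $z=x's$ gives $y^k=s\,z^{\ell'-1}$, so the suffix $z^{\ell'-1}$ carries the periods $\abs y$ and $\abs z$ over a length $\geq\abs y+\abs z$, forcing $y$ and $z$ to commute. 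Exponent $1$ gives $z=x'y^k$, i.e.\ $z=qr^k$ with $r=y$ and $q=x'$; reattaching the peeled prefixes reproduces case~\ref{D}. Exponent $2$ leaves the sporadic base $x'y^k=z^2$ with $\abs{x'}<\abs z$, whose resolution is the nested form $y=r(qr)^m$ of case~\ref{E}.

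The hard part, I expect, is exactly the two sporadic bases $x^2y=z^2$ and $x'y^k=z^2$ in which the short factor lies strictly inside a single period: here the periodicity lemmas no longer close the argument by themselves, and one must run a Euclidean-type descent on the two lengths, tracking the overlaps $z=xu=vy$ (respectively $z=x's=vy$), to isolate the commuting-free blocks $r$ and $q$ and to read off the precise exponents $m$. This bookkeeping is elementary but unforgiving, which is exactly why the classification, though `not difficult', is the kind of result that benefits from formalization.
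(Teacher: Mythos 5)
Your skeleton for cases \ref{A}--\ref{C} essentially matches the paper's: commuting words give \ref{A}, Lyndon--Sch\"utzenberger forces $\min(j,k)=1$, division of $x$ by $z$ gives \ref{B}, and for $k=1$, $j\geq 2$ your two-step reduction (Fine--Wilf kills $|z|\leq|x|$ and $j\geq 3$, then lengths force $\ell=2$) reaches the same place as the paper's one-stroke length identity $|y|=(\ell j-\ell-j)|x|+\ell|u|$; both then resolve $x=uv=vu'$ via the conjugation lemma, which is exactly your ``Euclidean-type descent'' packaged as a known fact. Where you genuinely diverge is $j=1$, $k\geq 2$: the paper does not induct on $\ell$ at all, but splits once on whether $y^k \leq_s z$ (giving \ref{D} immediately from $z=qy^k$, $x=z^{\ell-1}q$) or $z<_s y^k$, and in the latter case applies the Periodicity lemma to $y^k$, which has \emph{suffix} periodic roots $z$ and $y$, concluding $|y^k|<|z|+|y|$ and then solving the conjugation equality $y=vu=u'v$ uniformly for all $\ell\geq 2$ --- no sporadic base case ever appears.

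This difference is not just stylistic: your exclusion of residual exponent $\ell'\geq 3$ is broken as stated. From $y^k=s\,z^{\ell'-1}$ you apply Fine--Wilf to $z^{\ell'-1}$ with periods $|z|$ and $|y|$, which requires $(\ell'-1)|z|\geq |z|+|y|$, i.e.\ $(\ell'-2)|z|\geq|y|$. For $k=2$, $\ell'=3$ this fails in \emph{every} instance of the configuration: $3|z|=|x'|+2|y|$ with $0\leq|x'|<|z|$ forces $|z|<|y|\leq\tfrac{3}{2}|z|$, so $z^{\ell'-1}=z^2$ is strictly shorter than $|z|+|y|$ and the lemma does not apply. (One can check $k=2$, $\ell'=3$ is the only configuration where your bound fails, but it is forced there, not marginal.) The repair is to switch the target of the Periodicity lemma in that regime: when $|y|\geq|z|$, apply it to $y^k$ itself, which is a suffix of $z^{\ell'}$ and hence has suffix periodic roots $z$ and $y$, with $k|y|\geq 2|y|\geq|y|+|z|$; this makes $y$ and $z$ commute, whence $x'$, and then $x$, commutes with $y$ --- the desired contradiction. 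Note that this repair is precisely the paper's move: its Case~4 always runs Fine--Wilf on $y^k$, never on a power of $z$, which is why its dichotomy never meets your problematic subcase. With that one step patched (and the base $x'y^k=z^2$ worked out, which you only sketch but which the paper's conjugation-lemma argument covers), your peeling induction does deliver the theorem.
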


\begin{proof}
If $x$ and $y$ commute, then all three words commute, hence they are a power of a common word. A length argument yields the solution \ref{A}. 

Assume now that $\{x,y\}$ is a code. Then no pair of words $x$, $y$ and $z$ commutes.
We have shown in the overview of the proof of Theorem \ref{th:Theorem1} that $j = 1$ or $k = 1$
 by the Lyndon-Sch\"utzenberger theorem.
	The solution is then split into several cases.
	
\noindent	\emph{Case 1}: $j = k = 1$. \\
	Let $m$ and $r$ be such that $z^mr = x$ with $r$ a strict prefix of $z$.
	By setting $z = rq$, we obtain the solution \ref{B} with $n = \ell - m -1$.
	
\noindent	\emph{Case 2}: $j \geq 2, k = 1$.\\
%
	Since $|y| \leq |x|$ and $\ell \geq 2$, we have
	$$2|z| \leq |z^\ell| = |x^j| + |y| < 2|x^j|,$$
	so $z$ is a strict prefix of $x^j$.
	
	As $x^j$ has periodic roots both $z$ and $x$, and $z$ does not commute with $x$, the Periodicity lemma implies
	\[
	|x^j| < |z| + |x|.
	\]
	That is, $z = x^{j-1}u$, $x^j = zv$ and $x = uv$ for some nonempty words $u$ and $v$.
	As $v$ is a prefix of $z$, it is also a prefix of $x$. Therefore, we have
	\[
	x = uv = vu'
	\]
	for some word $u'$.
	This is a well known conjugation equality which implies $u = rq$, $u' = qr$ and $v = (rq)^nr$ for some words $r$, $q$ and an integer $n \geq 0$.
	
	We have
	\[
	j|x| + |y| = |x^jy| = |z^\ell| = \ell(j-1)|x| + \ell|u|,
	\]
	and thus $|y| = (\ell j-\ell-j)|x| + \ell|u|$.
	Since $|y| \leq |x|$, $|u| > 0$, $j \geq 2$, and $\ell \geq 2$,
	it follows that $\ell j - 	\ell - j = 0$, which implies $j = l = 2$.
	We therefore have $x^2y = z^2$ and $x^2 = zv$, hence $vy = z$.

	Combining $u = rq$, $u' = qr$, and $v = (rq)^nr$ 
	with $x = vu'$, $z = x^{j-1}u = xu = vu'u$, and $vy = z$, we obtain the solution \ref{C} with $m = n+1$. The assumption $\abs y \leq \abs x$ implies $m \geq 2$.
	
	
	
\noindent\emph{Case 3}: $j = 1, k \geq 2, y^k {\leq}_s z$. \\
	We have $z = qy^k$ for some word $q$.
	Noticing that $x = z^{\ell-1}q$ yields the solution \ref{D}.

\noindent\emph{Case 4}: $j = 1, k \geq 2, z <_s y^k$. \\
	This case is analogous to the second part of Case 2.
	Using the Periodicity lemma, we obtain $uy^{k-1} = z$, $y^k = vz$, and $y = vu$ with nonempty $u$ and $v$.
	As $v$ is a suffix of $z$, it is also a suffix of $y$, and we have
	\[
	y = vu = u'v
	\]
	for some $u'$.
	Plugging the solution of the last conjugation equality, namely
	$u' = rq$, $u = qr$, $v = (rq)^nr$,
	into $y = u'v$, $z = uy^{k-1}$ and $z^{\ell-1} = xv$ gives the solution \ref{E} with $m = n + 1$. 

Finally, the words $r$ and $q$ do not commute since $x$ and $y$, which are generated by $r$ and $q$, do not commute.

The proof is completed by a direct verification of the opposite implication. 	
	\qed
	
\end{proof}

We now show that, for a given not primitivity preserving binary code, there is a unique pair of exponents $(j,k)$ such that $x^jy^k$ is imprimitive.

\begin{lemma}[\texttt{LS\_unique}] \label{le:jk_unique}
Let $B = \{x,y\}$ be a code.
Assume $j,k,j',k' \geq 1$.
If both $x^jy^k$ and $x^{j'}y^{k'}$ are imprimitive, then $j = j'$ and $k = k'$.
\end{lemma}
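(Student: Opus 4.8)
The plan is to deduce everything from the parametric classification of Theorem~\ref{thm:xjykzl_solution}. First I would reduce to the case $|y| \le |x|$: the set $B$ is unordered, primitivity is invariant under conjugation, and $x^jy^k$ is conjugate to $y^kx^j$, so relabelling the two generators (which simultaneously swaps $(j,k)$ with $(k,j)$ and $(j',k')$ with $(k',j')$) leaves both the hypothesis and the conclusion unchanged while allowing me to put the shorter word in the role of $y$.

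With $|y|\le|x|$ fixed, imprimitivity of $x^jy^k$ means $x^jy^k=z^\ell$ for some $z$ and some $\ell\ge2$, so Theorem~\ref{thm:xjykzl_solution} applies. Since $B$ is a code, $x$ and $y$ do not commute, which rules out Case~\ref{A} and, by the Lyndon--Sch\"utzenberger theorem, forbids $j\ge2$ and $k\ge2$ simultaneously. Inspecting the remaining Cases~\ref{B}--\ref{E} then shows that every admissible pair satisfies $\min(j,k)=1$, and that $j\ge2$ is possible only in Case~\ref{C}, where it forces $(j,k)=(2,1)$. Hence any pair making $x^jy^k$ imprimitive lies in $\{(1,1),(2,1)\}\cup\{(1,k):k\ge2\}$, and likewise for $(j',k')$; in particular $j,j'\le2$. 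This already reduces the statement to showing that two \emph{distinct} pairs from this list cannot both produce imprimitive words.

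To finish I would assume, for contradiction, a second imprimitive word $x^{j'}y^{k'}=w^{\ell'}$ with $(j',k')\ne(j,k)$ and derive $xy=yx$, contradicting that $B$ is a code. I would split according to the shape of the two pairs. The conflicts involving $j=2$ (Case~\ref{C}) or the base case $(1,1)$ against a member of the family are the more rigid ones: substituting the explicit forms of $x$, $y$, $z$ from Theorem~\ref{thm:xjykzl_solution} into the competing power equation and applying the Periodicity lemma (or re-applying Lyndon--Sch\"utzenberger) pins down a forced coincidence of periods that collapses to $x$ and $y$ commuting. The step I expect to be the genuine obstacle is excluding two distinct members of the infinite family, i.e. $xy^{k}=z^\ell$ and $xy^{k'}=w^{\ell'}$ with $2\le k<k'$: here $z^\ell$ is a prefix of $w^{\ell'}$, so the shared prefix carries the two periods $|z|$ and $|w|$, and the plan is to make these overlaps long enough (controlling $\ell,\ell'$ and the lengths via $|y|\le|x|$) for the Periodicity lemma to force $\rho z=\rho w$, and thence $xy=yx$. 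The remaining finitely many juxtapositions are routine length-and-prefix checks.
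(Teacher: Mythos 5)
Your opening reduction is sound and genuinely a different organization from the paper: you first invoke Theorem~\ref{thm:xjykzl_solution} to narrow the admissible exponent pairs (with $|y|\leq|x|$) to $\{(1,1),(2,1)\}\cup\{(1,k):k\geq 2\}$, and then try to exclude the coexistence of two distinct pairs. The paper instead splits directly on the patterns $j=j'$, $k=k'$, and ``both differ'', and only calls on Case~\ref{C} of Theorem~\ref{thm:xjykzl_solution} in the last of these. The difference matters, because the decisive exclusions are exactly where your sketch has a genuine gap.

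The step that fails is the one you yourself flag as the obstacle: $xy^{k}=z^{\ell}$ and $xy^{k'}=w^{\ell'}$ with $2\leq k<k'$. The overlap carrying both periods is exactly the common prefix $z^{\ell}=xy^{k}$, of length $\ell|z|$, and you cannot ``control'' $\ell,\ell'$ --- they are given, and $\ell=\ell'=2$ is precisely the configuration that must be excluded. In that configuration $|w|=(|x|+k'|y|)/2>(|x|+k|y|)/2=|z|$ is forced by $k'>k$ alone (the hypothesis $|y|\leq|x|$ does not help), so the overlap has length $2|z|<|z|+|w|$ and the Periodicity lemma is out of reach. Nor does the bare relation $z^{2}\leq_p w^{2}$ imply $\rho\, z=\rho\, w$: for instance $z=\mathtt{aba}$, $w=\mathtt{abaab}$ gives $z^{2}=\mathtt{abaaba}\leq_p w^{2}=\mathtt{abaababaab}$ with distinct primitive roots. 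Extra structure is needed, and the paper gets it by rewriting the second equation as $z^{\ell}\,y^{k'-k}=w^{\ell'}$: for $k'-k\geq 2$ the Lyndon--Sch\"utzenberger theorem applies directly, and for $k'-k=1$ it uses the auxiliary lemma \texttt{imprim\_ext\_suf\_comm} (if $uv$ and $uvv$ are both imprimitive, then $u$ and $v$ commute, a consequence of the Periodicity lemma), applied with $u=xy^{k-1}$, $v=y$; either way $x$ and $y$ would commute, contradicting that $B$ is a code. A secondary weakness: you classify the cross case $(2,1)$ versus $(1,k')$ as rigid and routine, but this is where the paper expends real effort --- it derives $|x|<|z_2|$ by the Periodicity lemma applied to $x\cdot qr$ with periodic roots $rq$ and $z_2$, and then a synchronization argument showing $rqrq$ cannot be a factor of $(qrrq)^{k'}$ unless $r$ and $q$ commute --- so ``substitute the parametric forms and apply the Periodicity lemma'' is a plan, not yet a proof, there as well.
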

\begin{proof}
Let $z_1,z_2$ be primitive words and $\ell,\ell' \geq 2$ be such that
\begin{equation}\label{eq:both_imprim}
x^jy^k = z_1^\ell \quad \text{ and } \quad x^{j'}y^{k'} = z_2^{\ell'}.
\end{equation}
Since $B$ is a code, the words $x$ and $y$ do not commute.  We proceed by contradiction.

\noindent \emph{Case 1}: First, assume that $j = j'$ and $k \neq k'$.\\
Let, without loss of generality, $k < k'$.
From \eqref{eq:both_imprim} we obtain $z_1^\ell y^{k' - k} = z_2^{\ell'}$.
The case $k' - k \geq 2$ is impossible due to the Lyndon-Sch\"utzenberger theorem.
Hence $k' - k = 1$. This is another place where the formalization triggered a simple and nice general lemma (easily provable by the Periodicity lemma) which will turn out to be useful also in the proof of Theorem \ref{th:main}. Namely, the lemma \verb|imprim_ext_suf_comm| claims that if both $uv$, and $uvv$ are imprimitive, then $u$ and $v$ commute. We apply this lemma to $u = x^jy^{k-1}$ and $v = y$, obtaining a contradiction with the assumption that $x$ and $y$ do not commute.
%

\noindent \emph{Case 2.} The case $k = k'$ and $j \neq j'$ is symmetric to Case 1.

\noindent \emph{Case 3.} Let finally $j \neq j'$ and $k \neq k'$. 
The Lyndon-Sch\"utzenberger theorem implies that either $j$ or $k$ is one, and
similarly either $j'$ or $k'$ is one. We can therefore assume that $k = j' = 1$ and $k',j \geq 2$. Moreover, we can assume that $\abs y \leq \abs x$. Indeed, in the opposite case, we can consider the words  $y^kx^j$ and $y^{k'}x^{j'}$ instead, which are also both imprimitive.

Theorem~\ref{thm:xjykzl_solution} now allows only the case \ref{C} for the equality $x^jy = z_1^\ell$. We therefore have $j = \ell = 2$ and 
$x = (rq)^mr$, $y = qrrq$ for an integer $m \geq 2$ and some non-commuting words $r$ and $q$.
Since $y = qrrq$ is a suffix of $z_2^\ell$, this implies that $z_2$ and $rq$ do not commute. Consider the word $x \cdot qr = (rq)^mrqr$, which is a prefix of $xy$, and therefore also of $z_2^\ell$. This means that $x \cdot qr$ has two periodic roots, namely $rq$ and $z_2$, and the Periodicity lemma implies that $\abs {x \cdot qr}< \abs{rq}  + \abs {z_2}$. Hence $x$ is shorter than $z_2$. The equality $xy^{k'} = z_2^{\ell'}$, with $\ell' \geq 2$, now implies  on one hand that $rqrq$ is a prefix of $z_2$, and on the other hand that $z_2$ is a suffix of $y^{k'}$.
It follows that $rqrq$ is a factor of $(qrrq)^k$. 
Hence $rqrq$ and $qrrq$ are conjugate, which is possible only if $r$ and $q$ commute, a fact not difficult to prove, see Appendix~\ref{sec:appendix}. This is a contradiction. \qed
\end{proof}

The rest of the paper, and therefore also of the proof of Theorem~\ref{th:Theorem1}, is organized as follows. 
In Section~\ref{sec:interpretace}, we introduce a general theory of interpretations, which is behind the main idea of the proof, and apply it to the (relatively simple) case of a binary code with words of the same length. In Section~\ref{sec:square} we characterize the unique disjoint extendable $\{x,y\}$-interpretation of the square of the longer word $x$.
This is a result of independent interest, and also the cornerstone of the proof of Theorem~\ref{th:Theorem1} which is completed in Section~\ref{sec:main} by showing that a word containing at least two $x$'s witnessing that $\{x,y\}$ is not primitivity preserving is conjugate with $[x,x,y]$.

\section{Interpretations and the main idea}\label{sec:interpretace}
Let $X$ be a code, let $u$ be a factor of $\concat \ws$ for some $\ws \in \lists X$. The natural question is to decide how $u$ can be produced as a factor of words from $X$, or, in other words, how it can be interpreted in terms of $X$. This motivates the following definition. 
\begin{definition} \label{def:interpretation} Let $X$ be a set of words over $\Sigma$. We say that the triple $(p,s,\ws) \in \Sigma^*\times \Sigma^* \times \lists X$ is an \emph{$X$-interpretation} of a word $u \in \Sigma^*$ if
\begin{itemize}
	\item 	$\ws$ is nonempty;
	\item 	$p \cdot u \cdot s = \concat \ws$;
	\item  $p <_p \hd \ws$ and
	\item  $s <_s \last \ws$.
\end{itemize}	
\end{definition}
The definition is illustrated by the following figure, where $\ws = [w_1,w_2,w_3,w_4]$:
\[
\begin{tikzpicture}[thick]
	\draw (-2.5,-.2) rectangle (2.5,.2); \node at (0,0) {$u$};
	\draw (-3,.2) rectangle (-1,.6); \node at (-2,.4) {$w_1$};
	\draw (-1,.2) rectangle (0,.6); \node at (-.5,.4) {$w_2$};
	\draw (0,.2) rectangle (2.2,.6); \node at (1.1,.4) {$w_3$};
	\draw (2.2,.2) rectangle (3.6,.6); \node at (2.9,.4) {$w_4$};
	\draw[dotted] (-3,-.2) rectangle (-2.5,.2); \node at (-2.75,0) {$p$};
	\draw[dotted] (2.5,-.2) rectangle (3.6,.2); \node at (3.05,0) {$s$};
\end{tikzpicture}
\]
The first condition of the definition motivates the notation $p\, u\, s \sim_{\mathcal I} \ws$ for the situation when $(p, s, \ws)$ is an $X$-interpretation of $u$.

\begin{remark}
For sake of historical reference, we remark that our definition of $X$-interpretation differs from the one used in \cite{lerest}. Their formulation of the situation depicted by the above figure would be that $u$ is interpreted by the triple $(s', w_2 \cdot w_3, p')$ where $p\cdot s' = w_1$ and $p'\cdot s = w_4$.
This is less convenient for two reasons. First, the decomposition of $w_2 \cdot w_3$ into $[w_2,w_3]$ is only implicit here (and even ambiguous if $X$ is not a code). Second, while it is required that the the words $p'$ and $s'$ are a prefix and a suffix, respectively, of an element from $X$, the identity of that element is left open, and has to be specified separately.
\end{remark}

If $u$ is a nonempty element of $\gen X$ and $u = \concat \us$ for $\us \in \lists X$, then the $X$-interpretation $\emp\, u\, \emp \sim_{\mathcal I} \us$ is called \emph{trivial}.
Note that the trivial $X$-interpretation is unique if $X$ is a code. 

As nontrivial $X$-interpretations of elements from $\gen X$ are of particular interest, the following two concepts are useful.
\begin{definition}
	An $X$-interpretation $p\, u\, s \sim_{\mathcal I} \ws$ of $u = \concat \us$ is called 
	\begin{itemize}
		\item \emph{disjoint} if $\concat \ws' \neq p \cdot \concat \us'$ whenever $\ws' \leq_p \ws$ and $\us' \leq_p \us$.
		\item \emph{extendable} if $p \leq_s w_p$ and $s \leq_p w_s$ for some elements $w_p, w_s \in \gen X$.
	\end{itemize}
\end{definition}

Note that a disjoint $X$-interpretation is not trivial, and that being disjoint is relative to a chosen factorization $\us$ of $u$ (which is nevertheless unique if $X$ is a code).
\medskip

The definitions above are naturally motivated by \textbf{the main idea} of the characterization of sets $X$ that do not preserve primitivity, which dates back to Lentin and Sch\"utzenberger \cite{lentin}. If $\ws$ is primitive, while $\concat \ws$ is imprimitive,
say $\concat \ws = z^k$, $k\geq 2$, then the shift by $z$ provides a nontrivial and extendable $X$-interpretation of $\concat \ws$. (In fact, $k-1$ such nontrivial interpretations). Moreover, the following lemma, formulated in a more general setting of two words $\ws_1$ and $\ws_2$, implies that the $X$-interpretation is disjoint if $X$ is a code.
\begin{lemma}[\texttt{shift\_interpret}, \texttt{shift\_disjoint}]\label{lem:disjoint_interp}
	Let $X$ be a code. Let $\ws_1, \ws_2 \in \lists X$ be such that $z \cdot \concat \ws_1 = \concat \ws_2 \cdot z$ where $z \notin \gen X$. Then 
	$z \cdot \concat \vs_1 \neq \concat \vs_2$, whenever $\vs_1 \leq_p \ws_1^n$ and $\vs_2 \leq_p \ws_2^n$, $n\in \mathbb N$. 
	
	In particular $\concat \us$ has a disjoint extendable $X$-interpretation for any prefix $\us$ of $\ws_1$.  
\end{lemma}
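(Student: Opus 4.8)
The plan is to reduce the whole inequality to the single \emph{power equation}
\[
z \cdot \concat(\ws_1^n) = \concat(\ws_2^n) \cdot z ,
\]
which follows from the hypothesis $z \cdot \concat \ws_1 = \concat \ws_2 \cdot z$ by a routine induction on $n$ (the base case is the hypothesis itself, the step rewrites one occurrence of $z \concat \ws_1$ as $\concat \ws_2\, z$, using that $\concat$ is a monoid homomorphism and $\ws_i^n$ is the $n$-fold list power). I would then argue the inequality by contradiction: assuming $z \cdot \concat \vs_1 = \concat \vs_2$ with $\vs_1 \leq_p \ws_1^n$ and $\vs_2 \leq_p \ws_2^n$, split off the complementary suffix lists $\ws_1^n = \vs_1 \cdot \ws_1'$ and $\ws_2^n = \vs_2 \cdot \ws_2'$. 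Substituting into the power equation and left-cancelling $\concat \vs_2 = z\,\concat \vs_1$ yields the dual relation $\concat \ws_1' = \concat \ws_2' \cdot z$.

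The crux is now to apply the \textbf{stability condition}. All four lists $\vs_1,\vs_2,\ws_1',\ws_2'$ lie in $\lists X$, so their concatenations lie in $\gen X$ (which, being the free submonoid generated by the code $X$, contains $\emp$). Taking $u = \concat \ws_2'$ and $v = \concat \vs_1$, we have $u,v \in \gen X$, and the two relations give $uz = \concat \ws_1' \in \gen X$ and $zv = \concat \vs_2 \in \gen X$; stability then forces $z \in \gen X$, contradicting $z \notin \gen X$. Because $\emp \in \gen X$, this handles empty prefixes uniformly without any case split, and $z \notin \gen X$ already forces $z \neq \emp$. This reduction to stability is the conceptual heart of the argument.

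For the ``in particular'' part, fix $\us \leq_p \ws_1$ and use the power equation to locate $\concat \us$ as a factor of $\concat(\ws_2^n)$ at offset $\abs{z}$ (for $n$ large enough that it is covered by $\ws_2$-blocks). Let $\vs_2 \leq_p \ws_2^n$ be the maximal block-prefix of length at most $\abs{z}$, so $z = \concat \vs_2 \cdot p$ with $p <_p \hd \ws'$, where $\ws'$ is the shortest continuation whose concatenation reaches past $\concat \us$; this produces $p \cdot \concat \us \cdot s = \concat \ws'$, i.e.\ an $X$-interpretation $p\,\concat\us\,s \sim_{\mathcal I} \ws'$. Disjointness is then immediate from the inequality already proved: a coincidence $\concat \ws'' = p \cdot \concat \us''$ with $\ws'' \leq_p \ws'$ and $\us'' \leq_p \us$ would, after prepending $\vs_2$, yield $z \cdot \concat \us'' = \concat(\vs_2 \cdot \ws'')$ with $\us'' \leq_p \ws_1^n$ and $\vs_2\,\ws'' \leq_p \ws_2^n$, contradicting the first part. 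Extendability comes from the conjugacy structure: $p$ is a suffix of $z$, and $z$ is a suffix of $\concat(\ws_1^N) \in \gen X$ for large $N$ (read off the power equation), giving $w_p$; and the right continuation of the occurrence of $\concat \us$ equals $\concat$ of a suffix sublist of $\ws_1^n$, which lies in $\gen X$ and has $s$ as a prefix, giving $w_s$.

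I expect the main obstacle to be the bookkeeping in this last part: pinning down the covering sublist $\ws'$ together with the boundary words $p$ and $s$, and verifying the prefix/suffix relations needed for extendability against the overlapping periodic alignment. The first (inequality) part, by contrast, should be short once the power equation and the stability condition are in hand.
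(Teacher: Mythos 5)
Your proposal is correct and matches the paper's own proof in essentially every respect: the first part is proved exactly as in the paper (extend the hypothesis to the power equation $z \cdot \concat \ws_1^n = \concat \ws_2^n \cdot z$, split off complementary suffix lists, derive the dual relation $\concat \ws_1' = \concat \ws_2' \cdot z$, and conclude $z \in \gen X$ by the stability condition, a contradiction), and the second part likewise follows the paper's route of reading off the interpretation from the covering of $\concat \us$ by $\ws_2$-blocks shifted by $z$ and deducing disjointness from the first part. Your write-up is in fact more explicit than the paper's two-sentence sketch of the ``in particular'' clause, correctly filling in the bookkeeping for $p$, $s$, and extendability.
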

The excluded possibility is illustrated by the following figure.
\[
\begin{tikzpicture}[thick, scale = .85]
	\draw (2,-.5) rectangle (8,0); \node at (5,-.25) {$\concat \ws_1$};
	\draw[gray] (8,-.5) rectangle (14,0); \node at (11,-.25) {$\concat \ws_1$};
	\draw (0,0) rectangle (6,.5); \node at (3,.25) {$\concat \ws_2$};
    \draw (6,0) rectangle (12,.5); \node at (9,.25) {$\concat \ws_2$};	
    \draw[dotted] (0,-.5) rectangle (2,0); \node at (1,-.25) {$z$};	
    \draw[dotted] (12,0) rectangle (14,.5); \node at (13,.25) {$z$};	
    \draw [dashed] (7,.7) -- (7,-.7);
    \draw [thin] (2,-.5) to[bend right = 10] node[below] {$\concat \vs_1$} (7, -.5);  
    \draw [thin] (0,.5) to[bend left = 7] node[above] {$\concat \vs_2$} (7, .5);  
\end{tikzpicture}
\]
\begin{proof}
	First, note that $z \cdot \concat \ws_1^n = \concat \ws_2^n \cdot z$ for any $k$.  Let $\ws_1^n = \vs_1\cdot \vs_1'$ and 
	 $\ws_2^n= \vs_2\cdot \vs_2'$.	If $z \cdot \concat \vs_1 = \concat \vs_2$, then also $\concat \vs_2' \cdot z = \concat \vs_1'$.
	 This contradicts $z \notin \gen X$ by the stability condition. 
	 
	 An extendable $X$-interpretation of $\us$ is induced by the fact that $\concat \us$ is covered by $\concat (\ws_2 \cdot \ws_2)$. 
	 	 The interpretation is disjoint by the first part of the proof. \qed
\end{proof}
In order to apply the above lemma to the imprimitive $\concat \ws = z^k$ of a primitive $\ws$, set $\ws_1= \ws_2 = \ws$. The assumption $z \notin \gen X$ follows from the primitivity of $\ws$: indeed, if $z = \concat \zs$, with $\zs \in \lists X$, then $\ws = \zs^k$ since $B$ is a code. 

We first apply the main idea to a relatively simple case of nontrivial $\{x,y\}$-interpretation of the word $x \cdot y$ where $x$ and $y$ are of the same length.
\begin{lemma}[\texttt{uniform\_square\_interp}]\label{lem:xy_interp}
	Let $B = \{x,y\}$ be a code with $\abs x = \abs y$. Let $p\ (x\cdot y)\ s \sim_{\mathcal I} \vs$ be a nontrivial $B$-interpretation.
	Then $\vs = [x,y,x]$ or $\vs = [y,x,y]$ and $x\cdot y$ is imprimitive.
\end{lemma}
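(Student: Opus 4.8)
The plan is to run a length count to pin down the shape of $\vs$, then determine the three-letter pattern by a periodicity argument, and finally read off the imprimitivity of $x\cdot y$ from a commutation relation.

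First I would exploit that all codewords share the common length $n:=\abs x=\abs y$. Writing the interpretation as $p\cdot(x\cdot y)\cdot s=\concat\vs$ and counting lengths gives $n\cdot\abs{\vs}=\abs p+2n+\abs s$, where $0\le\abs p,\abs s<n$ by the strict prefix/suffix conditions. Hence $\abs p+\abs s=n(\abs\vs-2)\in[0,2n)$, forcing $\abs\vs\in\{2,3\}$. If $\abs\vs=2$ then $\abs p=\abs s=0$, so $\concat\vs=x\cdot y$ and, reading off the length-$n$ blocks, $\vs=[x,y]$; but that is the trivial interpretation, contradicting nontriviality. Thus $\abs\vs=3$ and $\abs p+\abs s=n$ with $0<\abs p,\abs s<n$. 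Set $i:=\abs p$, $m:=n-i$, and split $x=x_1x_2$, $y=y_1y_2$ with $\abs{x_1}=\abs{y_1}=m$. Aligning the two factorizations of $\concat\vs$ yields $\vs=[a,b,c]$ with
\[ a=p\,x_1,\qquad b=x_2\,y_1,\qquad c=y_2\,s, \]
and $a,b,c\in\{x,y\}$; in particular $x_1$ is a suffix of $a$ and a prefix of $x$, $x_2$ a prefix of $b$ and a suffix of $x$, and symmetrically for $y_1,y_2$.

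The heart of the proof — and the step I expect to be the main obstacle — is to show $a\neq b$, and, by the symmetric suffix argument, also $b\neq c$. Suppose $a=b$; without loss $a=b=x$, so the first $2n$ letters of $\concat\vs$ equal $x\cdot x$. Reading the shifted factorization shows $x$ occurs inside $x\cdot x$ at offsets $0$, $i$ and $n$; the overlapping pairs at offsets $(0,i)$ and $(i,n)$ give $x$ two periods, $i$ and $m$, with $i+m=n=\abs x$. The Periodicity lemma then forces $x$ to have period $g:=\gcd(i,m)$, so $x=t^{n/g}$ for a primitive $t$ with $n/g\ge 2$. Feeding this back into $b=x=x_2\,y_1$ shows the prefix $y_1$ of $y$ equals $t^{m/g}$; then examining $c$ — whether $c=x$ directly, or $c=y$ via the extra period $m$ it imposes on $y$ — forces all of $y$ to be a power of $t$ as well. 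Hence $x$ and $y$ commute, contradicting that $B$ is a code. Therefore $a\neq b$ and $b\neq c$, and since $\{x,y\}$ has only two elements we conclude $a=c$ with $b$ the other letter, i.e. $\vs=[x,y,x]$ or $\vs=[y,x,y]$.

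It remains to deduce that $x\cdot y$ is imprimitive; by conjugacy it suffices to treat $\vs=[x,y,x]$ (the case $[y,x,y]$ makes $y\cdot x$ imprimitive, hence $x\cdot y$ as well). Here $a=x$ forces $p$ to be the length-$i$ prefix of $x$ and $c=x$ forces $s$ to be the length-$m$ suffix of $x$; as $i+m=n$ these tile $x$, so $x=p\,s$. Substituting into the two expressions $\concat\vs=x\cdot y\cdot x=p\cdot x\cdot y\cdot s$ gives
\[ p\,s\,y\,p\,s = p\,(p\,s)\,y\,s, \]
and cancelling a leading $p$ and a trailing $s$ leaves $s\,y\,p=p\,s\,y$, that is, $(s\,y)\,p=p\,(s\,y)$. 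Thus the nonempty words $p$ and $s\,y$ commute, so both are powers of a common word $\rho$. Consequently $x\cdot y=p\,s\,y=p\,(s\,y)$ is a power of $\rho$, and since $\abs\rho\le\abs p=i<\abs{x\cdot y}$ it is a \emph{proper} power. Hence $x\cdot y$ is imprimitive, which completes the proof.
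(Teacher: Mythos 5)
Your length count, your analysis of the pattern $a=b=x$, and your closing commutation argument (from $x=p\,s$, cancelling to $(s\,y)\,p=p\,(s\,y)$, hence $x\cdot y$ a proper power of a common root) are all sound, and the overall strategy --- reduce to $\abs{\vs}=3$, rule out the bad three-letter patterns, then read off imprimitivity --- is essentially the paper's, which at this point simply checks all eight candidates.

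However, the step ``suppose $a=b$; without loss $a=b=x$'' is a genuine gap. There is no symmetry exchanging $x$ and $y$ here, because the interpreted word is specifically $x\cdot y$, and your argument for $a=b=x$ exploits exactly this asymmetry: the interpreted occurrence of $x$ at offset $i$ overlaps two occurrences of the \emph{same} word $x$ at offsets $0$ and $n$, which is what produces the two periods $i$ and $m$ of $x$ needed for the Periodicity lemma. If instead $a=b=y$, the three words meeting in the first $2n$ positions are $y,x,y$, and the only overlapping pair of equal words is the interpreted $y$ (offset $n+i$) against $b=y$ (offset $n$), yielding only the single period $i$ of $y$; if moreover $c=x$, i.e.\ $\vs=[y,y,x]$, there is no second overlapping pair at all (the interpreted $x$ at offset $i$ and $c=x$ at offset $2n$ are disjoint), so the Fine--Wilf step has nothing to act on. The reversal symmetry you invoke for $b\neq c$ pairs $a=b=x$ only with $b=c=y$, and pairs $a=b=y$ with $b=c=x$. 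Hence your argument disposes of $[x,x,x]$, $[x,x,y]$, $[x,y,y]$ and $[y,y,y]$, but leaves $[y,y,x]$ and its mirror image $[y,x,x]$ untreated. These require a genuinely different argument: for $\vs=[y,y,x]$ the alignment forces the length-$m$ prefix and suffix of $y$ to coincide, so $y$ has period $i$ and $x$ is the rotation of $y$ by $i$ positions; the remaining constraint coming from $c=x$ (the length-$i$ prefix of $x$ equals the length-$i$ suffix of $y$) must then be shown, by a synchronization argument on the period word of $y$, to collapse everything to $x=y$. The paper avoids exactly this trap by inspecting all eight candidates individually instead of arguing by a symmetry that the word $x\cdot y$ does not possess.
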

\begin{proof}
From $p \cdot x \cdot y \cdot s = \concat \vs$, it follows, by a length argument, that $\abs \vs$ is three. A straightforward way to proof the claim is to consider all eight possible candidates.
In each case, it is then a routine few line proof that shows that $x = y$, unless $\vs = [x,y,x]$ or $\vs = [y,x,y]$, which we omit. In the latter cases, $x \cdot y$ is a nontrivial factor of its square $(x \cdot y)\cdot (x \cdot y)$, which yields the imprimitivity of $x \cdot y$.  \qed
\end{proof}
The previous (sketch of the) proof nicely illustrates  on a small scale the advantages of formalization. It is not necessary to choose between a tedious elementary proof for sake of completeness on one hand, and the suspicion that something was missed on the other hand (leaving aside that the same suspicion typically remains even after the tedious proof). A bit ironically, the most difficult part of the formalization is to show that $\vs$ is indeed of length three, which needs no further justification in a human proof.

We have the following corollary which is a variant of Theorem \ref{th:main}, and also illustrates the main idea of its proof.
\begin{lemma}[\texttt{bin\_imprim\_not\_conjug}]\label{lem:not_conjugate}
	Let $B = \{x,y\}$ be a binary code with $\abs x = \abs y$. 
If $\ws \in \lists B$ is such that $\abs {\ws} \geq 2$, $\ws$ is primitive, and $\concat \ws$ is imprimitive, 
then $x$ and $y$ are not conjugate.	
\end{lemma}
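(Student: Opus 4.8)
The plan is to argue by contradiction: suppose $x$ and $y$ are conjugate, and derive that no such witness $\ws$ can exist. Since $\abs x = \abs y$, conjugacy of $x$ and $y$ means $y$ is a rotation of $x$, so they share a common primitive root only in the trivial cyclic sense; in particular both $x$ and $y$ have the same length and are conjugate, hence $\rho\, x$ and $\rho\, y$ are conjugate primitive words of equal length. First I would record that, because $B$ is a code, $x \neq y$, so the conjugation is by a nonempty, proper rotation.

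**Reducing to an interpretation of $xy$.**
The key idea, following the main idea of the paper, is to pass from the imprimitive $\concat\ws$ to a nontrivial $B$-interpretation. Since $\ws$ is primitive of length $\geq 2$ and $\concat\ws$ is imprimitive, say $\concat\ws = z^k$ with $k \geq 2$, Lemma~\ref{lem:disjoint_interp} (applied with $\ws_1 = \ws_2 = \ws$) yields a disjoint extendable $B$-interpretation of $\concat\ws$. Because $\ws$ contains at least two entries, I would zoom in on a length-$2\abs x$ factor of $\concat\ws$ that is covered by two consecutive code-words; using the equal-length hypothesis, such a factor has a nontrivial $B$-interpretation that presents it as $p\,(a\cdot b)\,s \sim_{\mathcal I} \vs$ for a suitable pair $a,b \in \{x,y\}$. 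Lemma~\ref{lem:xy_interp} then forces $\vs \in \{[a,b,a],[b,a,b]\}$ and, crucially, $a \cdot b$ to be imprimitive. The main obstacle will be bookkeeping: I must check that the shift really does produce an interpretation of a two-letter block (not straddling the block boundary trivially) and that the relevant block is genuinely $x\cdot y$ or $y\cdot x$ rather than $x\cdot x$ or $y\cdot y$ — this is exactly where disjointness and primitivity of $\ws$ are used to rule out the degenerate alignments.

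**Deriving the contradiction from conjugacy.**
Once Lemma~\ref{lem:xy_interp} gives that $x\cdot y$ (equivalently $y\cdot x$) is imprimitive, I would combine this with the conjugacy assumption. If $x \sim y$ with $\abs x = \abs y$, write $x = rq$ and $y = qr$. Then $x \cdot y = rq\,qr$ and its conjugate $y \cdot x = qr\,rq$; imprimitivity of $x\cdot y = rqqr$ means it is a proper power, and a short argument (of the same flavour as the conjugacy-forces-commutation fact cited in the proof of Lemma~\ref{le:jk_unique} and proved in Appendix~\ref{sec:appendix}) shows a palindromic/conjugate power of this shape can only be imprimitive when $r$ and $q$ commute. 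But $r$ and $q$ commuting forces $x$ and $y$ to commute, contradicting that $B$ is a code. Hence $x$ and $y$ cannot be conjugate.

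**Expected difficulty.**
I expect the conceptual steps to be short, and the genuine effort to lie in the interpretation-extraction step: selecting the correct two-consecutive-letters window of $\concat\ws$ whose induced interpretation is nontrivial, and verifying that disjointness (from Lemma~\ref{lem:disjoint_interp}) prevents the window from aligning trivially with the factorization $\ws$. This mirrors the remark in the proof of Lemma~\ref{lem:xy_interp} that the hard part is pinning down the length and position of $\vs$; everything after invoking Lemma~\ref{lem:xy_interp} is a clean conjugacy-versus-commutation computation.
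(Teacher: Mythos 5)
Your proposal follows essentially the same route as the paper's proof: extract a nontrivial $B$-interpretation of a mixed adjacent block $x\cdot y$ (or $y\cdot x$), which must exist since primitivity of $\ws$ with $\abs\ws\geq 2$ forces both letters to occur, apply Lemma~\ref{lem:xy_interp} to conclude that $x\cdot y$ is imprimitive, and then derive from $x=rq$, $y=qr$ that $r$ and $q$ commute, contradicting that $B$ is a code. The only point you leave as a gesture is the final commutation step, which the paper settles cleanly by noting $rqqr \sim rrqq = r^2q^2$, so imprimitivity transfers along conjugacy and the Lyndon--Sch\"utzenberger theorem gives $rq = qr$ directly, with no need for the synchronization-style argument you allude to.
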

\begin{proof}
	Since $\ws$ is primitive and of length at least two, it contains both letters $x$ and $y$. Therefore, it has either $[x,y]$ or $[y,x]$ as a factor. The imprimitivity of $\concat \ws$ yields a nontrivial $B$-interpretation of $x \cdot y$, which implies that $x \cdot y$ is not primitive by Lemma \ref{lem:xy_interp}.
		 
    Let $x$ and $y$ be conjugate, and let $x = r \cdot q$ and $y = q \cdot r$. Since $x \cdot y = r \cdot q \cdot q \cdot r$ is imprimitive, also 
     $r \cdot r \cdot q \cdot q$ is imprimitive. Then $r$ and $q$ commute by the theorem of Lyndon and Sch\"utzenberger, a contradiction with $x \neq y$. \qed
\end{proof}

\section{Binary interpretation of a square}\label{sec:square}
Let $B = \{x,y\}$ be a code such that $\abs y \leq \abs x$.
In accordance with the main idea, the core technical component of the proof is the description of the disjoint extendable $B$-interpretations of the square $x^2$. 
This is a very nice result which is relatively simple to state but difficult to prove, and which is valuable on its own. As we mentioned already, it can be obtained from Th\'eor\`eme 2.1 and Lemme 3.1 in \cite{lerest}.

\begin{theorem}[\texttt{square\_interp\_ext.sq\_ext\_interp}] \label{thm:sq_interp}
Let $B = \{x,y\}$ be a code such that  $\abs y \leq \abs x$, both $x$ and $y$ are primitive, and
	$x$ and $y$ are not conjugate.

Let $p\, (x\cdot x)\, s \sim_{\mathcal I} \ws$ be a disjoint extendable $B$-interpretation. 
Then
\begin{align*}
 \ws &= [x,y,x], &
  s \cdot p &= y, &
p \cdot x &= x \cdot s.
\end{align*}
\end{theorem}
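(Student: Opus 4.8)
The plan is to pin the interpretation down by first locating the entry of $\ws$ that covers the join between the two copies of $x$, showing that this entry must be $y$, and only then controlling what sits on either side of it.

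First I would record the elementary consequences of the hypotheses. Write $\ws = [w_1,\dots,w_m]$ with each $w_i \in \{x,y\}$. Disjointness makes the interpretation nontrivial, so $\abs{\concat \ws} = \abs p + 2\abs x + \abs s > 2\abs x$; since any two $B$-words together have length at most $2\abs x$, we get $m \geq 3$, and taking $\us' = \emp$ and $\us' = [x,x]$ in the disjointness condition forces $p \neq \emp$ and $s \neq \emp$. The decisive use of disjointness is with the prefix $[x] \leq_p [x,x]$: it forbids $\concat \ws' = p\cdot x$ for every prefix $\ws' \leq_p \ws$, i.e. no boundary between consecutive entries of $\ws$ lies at the midpoint position $\abs p + \abs x$. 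Hence a single entry $w_c$ straddles the midpoint.

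Next I would identify $w_c$. If $w_c = x$, then this central copy occurs as a factor of $x\cdot x$ beginning at a position $i$ with $0 < i < \abs x$; reading off the overlap with the two halves yields $x = \alpha\beta = \beta\alpha$ with $\alpha,\beta$ nonempty, so $\alpha$ and $\beta$ commute and $x$ is imprimitive, contradicting the hypothesis. Therefore $w_c = y$, and I would factor $y = y'y''$ with $y'$ a nonempty suffix of the first $x$ and $y''$ a nonempty prefix of the second $x$. The two halves then split off genuine $B$-interpretations of a single $x$, namely $p\; x\; y'' \sim_{\mathcal I} [w_1,\dots,w_c]$ and $y'\; x\; s \sim_{\mathcal I} [w_c,\dots,w_m]$, each inheriting extendability and the shift-disjointness of Lemma~\ref{lem:disjoint_interp}.

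The technical heart of the proof, and the step I expect to be the main obstacle, is to show that each induced interpretation of $x$ uses exactly one full copy of $x$ besides the straddling $y$, i.e. that $c = 2$, $m = 3$ and $w_1 = w_3 = x$. The difficulty is that, a priori, the short word $y$ could be packed several times into the tiling of a single $x$. I would exclude this with the Periodicity Lemma: any configuration with an extra short factor equips a long portion of $x^2$ with two periodic roots, forcing $x$ to have a period shorter than $\abs x$ that makes it either a proper power (contradicting primitivity) or conjugate to $y$ (contradicting the non-conjugacy hypothesis). This is exactly the bookkeeping carried by Th\'eor\`eme 2.1 and Lemme 3.1 of \cite{lerest}, and the point where the paper's ``gluing'' device should streamline matters, by collapsing the straddling or an internal $y$ with a neighbouring entry and reducing to a shorter instance.

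Once $\ws = [x,y,x]$ is established, the boundary equations fall out by cancellation from $p\cdot x\cdot x\cdot s = x\cdot y\cdot x$ with $p <_p x$ and $s <_s x$. Splitting off $w_1 = x$ on the left gives $p\cdot x = x\cdot y'$, and splitting off $w_3 = x$ on the right gives $x\cdot s = y''\cdot x$, so $p$ and $s$ are prefix and suffix of $x$ of lengths $\abs{y'}$ and $\abs{y''}$. Using the primitivity of $x$ to reconcile the two conjugacy relations forces $\abs p = \abs s$ and $y' = s$, $y'' = p$, whence $p\cdot x = x\cdot s$ and $s\cdot p = y' y'' = y$. A final consistency check (in particular that $\abs y = \abs p + \abs s$ is then even) completes the argument.
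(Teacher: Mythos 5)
The decisive flaw is in your endgame, and it is not a technicality. Once $\ws=[x,y,x]$ is in hand, you claim that the two conjugation relations $p\cdot x = x\cdot y'$ and $x\cdot s = y''\cdot x$, together with the primitivity of $x$, force $y'=s$, $y''=p$, and hence $s\cdot p = y$ and $p\cdot x = x\cdot s$. This is false, and it is precisely the point where the extendability hypothesis---which you mention once in passing and never actually use---must enter. The paper's proof extracts from the case analysis the parametrization $x=(rt)^{m+1}r$, $y=(tr)^{k+1}(rt)^{\ell+1}$, $p=(rt)^{k+1}$, $s=(tr)^{\ell+1}$ with $rt\neq tr$; one checks $p\,x = x\,(tr)^{k+1}$ and $x\,s=(rt)^{\ell+1}x$, so $p\cdot xx\cdot s = x\,y\,x$ and the interpretation is disjoint for \emph{all} admissible $k,\ell$, while $s\cdot p = y$ holds if and only if $k=\ell$. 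It is extendability alone that forces $k=\ell$. Concretely, take $x=\mathtt{0101010}$, $y=\mathtt{100101}$, $p=\mathtt{01}$, $s=\mathtt{1010}$ (i.e.\ $r=\mathtt{0}$, $t=\mathtt{1}$, $m=2$, $k=0$, $\ell=1$): then $x,y$ are primitive and non-conjugate, $\abs y\leq\abs x$, $p<_p x$, $s<_s x$, and $p\cdot xx\cdot s = xyx$ with no aligned boundaries (prefix lengths $0,7,13,20$ versus $2,9,16$), so $p\,(x\cdot x)\,s\sim_{\mathcal I}[x,y,x]$ is a disjoint interpretation---yet $s\cdot p=\mathtt{101001}\neq y$ and $\abs{p\cdot x}\neq\abs{x\cdot s}$. (Of course $s=\mathtt{1010}$ is a prefix of no element of $\gen{\{x,y\}}$, since $x$ begins with $\mathtt{01}$ and $y$ with $\mathtt{1001}$, so this interpretation is not extendable.) So ``reconciling the two conjugacy relations by primitivity'' cannot possibly work; the last step of the theorem must be argued from extendability, as the paper does.

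A secondary concern is the step you yourself flag as the technical heart. Your proposed tool---a single application of the Periodicity Lemma showing that any extra internal $y$ equips a long portion of $x^2$ with two periodic roots---covers only the extreme configurations. The paper, after reducing to $\ws=[\hd\ws]\cdot[y]^k\cdot[\last\ws]$ and showing $\hd\ws=\last\ws=x$ (itself needing the non-obvious Lemma~\ref{lem:aux1}, not just a factor-length count), splits the remaining case into four subcases according to the overlap of the prefix root $v\cdot y^\ell$ and the suffix root $y^m\cdot u$ inside $x$; only the smallest overlap is a direct Periodicity Lemma application and only the largest is direct synchronization, while the two middle cases (\texttt{short\_overlap}, \texttt{medium\_overlap}) are exactly the tedious core that does not follow from the naive two-periods argument. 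Also, the gluing device you hope will streamline this step plays no role here: in the paper it appears only in the induction concluding Theorem~\ref{th:main} in Section~\ref{sec:main}. Since you honestly flagged the heart as an obstacle, the fatal unacknowledged error remains the endgame above.
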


In order to appreciate the theorem, note that the definition of interpretation implies
\[p \cdot x \cdot x \cdot s = x \cdot y \cdot x,\] 
hence $x \cdot y \cdot x = (p \cdot x)^2$. This will turn out to be the only way how primitivity may not be preserved if $x$ occurs at least twice in $\ws$. Here is an example with $x = \verb|01010|$ and $y = \verb|1001|$:
\[
\begin{tikzpicture}[thick]
	\foreach \x/\p in {0/{\tt 0},1/{\tt 1},2/{\tt 0},3/{\tt 1},4/{\tt 0},5/{\tt 1},6/{\tt 0},7/{\tt 0},8/{\tt 1},9/{\tt 0},10/{\tt 1},11/{\tt 0},12/{\tt 1},13/{\tt 0}} 
	\node (p\x) at (\x*.45,0) {\p};
	\draw (p0.south west) rectangle (p4.north east); 
	\draw (p4.south east) rectangle (p8.north east); 
	\draw (p8.south east) rectangle (p13.north east); 
	\foreach \x/\p in {0/{\tt 0},1/{\tt 1},2/{\tt 0},3/{\tt 1},4/{\tt 0},5/{\tt 1},6/{\tt 0},7/{\tt 0},8/{\tt 1},9/{\tt 0},10/{\tt 1},11/{\tt 0},12/{\tt 1},13/{\tt 0}} 
     \node (q\x) at (\x*.45,-.45) {\p};
\draw (p2.south west) rectangle (q6.south east); 
\draw (p6.south east) rectangle (q11.south east); 
\draw[dotted] (p0.south west) rectangle (q2.south west);
\draw[dotted] (p11.south east) rectangle (q13.south east);

\end{tikzpicture}
\]

\begin{proof}
By the definition of a disjoint interpretation, we have $p\cdot x\cdot x \cdot s = \concat \ws$, where $p \neq \varepsilon$	and $s \neq \varepsilon$. A length argument implies that $\ws$ has length at least three. 
Since a primitive word is not a nontrivial factor of its square, we have $\ws = [\hd \ws] \cdot [y]^k \cdot [\last \ws]$, with $k \geq 1$. 
Since the interpretation is disjoint, we can split the equality into $p \cdot x = \hd \ws \cdot y^m \cdot u$ and $x \cdot s = v \cdot y^\ell \cdot \last \ws$, where $y = u \cdot v$, both $u$ and $v$ are nonempty, and $k = \ell + m + 1$.  
 We want to show $\hd \ws = \last \ws = x$ and $m = \ell = 0$.
 The situation is mirror symmetric so we can solve cases two at a time.
 
If $\hd \ws = \last \ws = y$, then powers of $x$ and $y$ share a factor of length at least $\abs x + \abs y$. Since they are primitive, this implies that they are conjugate, a contradiction. The same argument applies when 
 $\ell \geq 1$ and $\hd \ws = y$  (if $m \geq 1$ and $\last \ws = y$
 respectively). Therefore, in order to prove $\hd \ws = \last \ws = x$, it remains to exclude the case $\hd \ws = y$, $\ell = 0$ and $\last \ws = x$ ($\last \ws = y$, $m = 0$ and $\hd \ws = x$ respectively).
This is covered by one of the technical lemmas that we single out:

\begin{lemma}[\texttt{pref\_suf\_pers\_short}]\label{lem:aux1}
Let $x \leq_p v \cdot x$, $x \leq_s p \cdot u \cdot v \cdot u$	and $\abs x > \abs{v \cdot u}$ with $p \in \gen{\{u,v\}}$.
Then $u \cdot v = v \cdot u$.
\end{lemma}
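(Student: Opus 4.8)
The plan is to read the two hypotheses as overlapping periodicities and to finish with the Periodicity lemma. First I would rewrite $x \leq_p v \cdot x$ as the statement that $v$ is a periodic root of $x$, i.e. $x \leq_p v^\omega$, so that $x$ has period $\abs v$ with $v$ as its prefix. Since $p \in \gen{\{u,v\}}$, the word $p\cdot u\cdot v\cdot u$ itself lies in $\gen{\{u,v\}}$, and its last three blocks are $u$, $v$, $u$. Because $\abs x > \abs{v\cdot u}$, the suffix $x$ contains the final $v\cdot u$ and reaches strictly into the $u$-block preceding this central $v$. In the principal case this records as $u\cdot v\cdot u \leq_s x$, which I would keep together with $v \leq_p x$ as the two facts to be combined.

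The key step is to manufacture a second period. The two copies of $u$ flanking the central $v$ are equal and lie at distance $\abs u + \abs v$, so $u\cdot v\cdot u$ has period $\abs u + \abs v$; intersecting this with the ambient period $\abs v$ inherited from $x \leq_p v^\omega$ and subtracting yields period $\abs u$ on the prefix of $u\cdot v\cdot u$ of length $2\abs u$. I would then apply the Periodicity lemma to the two periods $\abs u$ and $\abs v$: once the overlap is long enough, it produces the finer period $\gcd(\abs u,\abs v)$ on this window. The intended conclusion is that this period spreads, through the ambient period $\abs v$ of $x$, to all of $v$ and to $u$, exhibiting both as powers of a common word of length $\gcd(\abs u,\abs v)$; two powers of a common word commute, giving $u\cdot v = v\cdot u$.

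The main obstacle is the length bookkeeping needed to legitimately apply the Periodicity lemma and to capture all of $v$, which is exactly why both hypotheses $\abs x > \abs{v\cdot u}$ and $p \in \gen{\{u,v\}}$ are present. When $x$ only slightly exceeds $v\cdot u$, the window of length $2\abs u$ can be too short for the Periodicity lemma when $\abs v$ greatly exceeds $\abs u$; in this ``short'' regime I would instead argue directly from $v \leq_p x$ and $x \leq_s u\cdot v\cdot u$ (where only a proper suffix of the first $u$ is present), forcing the central genuine $v$ into alignment with the period and deducing by a conjugation/prefix computation that $u$ and $v$ are powers of a common word. When $x$ is longer, the blocks of $p \in \gen{\{u,v\}}$, being factors of $v^\omega$, furnish further aligned occurrences of $u$ and $v$ that lengthen the window until the Periodicity lemma applies and the finer period reaches all of $v$. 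Keeping these two regimes separate and verifying the precise overlap lengths in each is the delicate part; everything else is routine prefix/suffix manipulation of the kind already packaged in the underlying library.
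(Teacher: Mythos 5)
Your opening reduction is sound as far as it goes: in the case $u\cdot v\cdot u \leq_s x$, the suffix $u\cdot v\cdot u$ is a factor of $v^\omega$, so it carries the period $\abs v$ alongside its intrinsic period $\abs{u\cdot v}$, and the period-subtraction argument indeed yields period $\abs u$ on the prefix of length $2\abs u$, i.e.\ $u \leq_p v\cdot u$. But the Fine--Wilf finish you build on this is valid only when $\abs v \leq \abs u + \gcd(\abs u,\abs v)$, and your diagnosis of when it fails is mistaken: the obstruction is \emph{not} confined to the regime where ``$x$ only slightly exceeds $v\cdot u$''. Take $u = a$, $v = aba$, $x = (aba)^k a$ with $k \geq 2$. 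Then $x \leq_p v \cdot x$, the full square-free pattern $u\cdot v\cdot u = aabaa$ is a suffix of $x$, $\abs x > \abs{v\cdot u}$, and $x$ is as long as you like, yet $u\cdot v \neq v\cdot u$. Every hypothesis of the lemma except $p \in \gen{\{u,v\}}$ is satisfied here, so no periodicity reasoning on $x$ alone --- in particular no ``lengthening of the window'' --- can ever close the proof: the window of length $2\abs u = 2$ never grows, however long $x$ becomes. What rules this configuration out is the membership $p \in \gen{\{u,v\}}$ itself: the overhang of $x$ beyond $u\cdot v\cdot u$ is $(aba)^{k-2}ab$, which must be a suffix of $p$, while every nonempty element of $\gen{\{a,aba\}}$ ends in $a$. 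The heart of the lemma is thus an alignment (``synchronization'') analysis of how a suffix of a word of $\gen{\{u,v\}}\cdot u$ can sit inside $v^\omega$ --- a letter-level contradiction invisible to the Periodicity lemma.

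This is precisely the part your proposal defers in both regimes (``argue directly \dots\ by a conjugation/prefix computation'' in the short case, ``furnish further aligned occurrences \dots\ until the Periodicity lemma applies'' in the long case), and in the long case the deferred step is not merely unproven but, by the example above, the proposed mechanism cannot work: the blocks of $p$ are individually factors of $v^\omega$, which gives each of them period $\abs v$ at an unknown offset, and nothing forces them into a common Fine--Wilf window with $u$. For calibration, note that the paper deliberately skips this proof in the text (it lives in the formalization as \texttt{pref\_suf\_pers\_short}), remarking only that under the \emph{stronger} hypothesis that $v\cdot u\cdot v$ is a factor of $x$ the claim follows easily by synchronizing occurrences of the primitive root $\rho\, v$. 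Your sketch covers essentially that easy case plus the Fine--Wilf-accessible case $\abs v \leq \abs u + \gcd(\abs u,\abs v)$; the delicate case that motivates the lemma's precise hypotheses remains open in your write-up, and the route you indicate for it would have to be replaced by a genuine synchronization argument on the factorization of $p$.
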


This lemma indeed excludes the case we wanted to exclude, since the conclusion implies that $y$ is not primitive. 
We skip the proof of the lemma here and make instead an informal comment. Note that $v$ is a period root of $x$. In other words, $x$ is a factor of $v^\omega$. Therefore, with the stronger assumption that
$v \cdot u \cdot v$ is a factor of $x$, the conclusion follows easily by the familiar principle that $v$ being a factor of $v^\omega$ ``synchronizes'' primitive roots of $v$.   	 
Lemma~\ref{lem:aux1} then exemplifies one of the virtues of formalization, which makes it easy to generalize auxiliary lemmas, often just by following the most natural proof and checking its minimal necessary assumptions.

Now we have $\hd \ws = \last \ws = x$, hence  $p \cdot x = x \cdot y^m \cdot u$ and $x \cdot s = v \cdot y^\ell \cdot x$. The natural way to describe this scenario is to observe that $x$ has both the (prefix) period root $v \cdot y^\ell$, and the suffix period root $y^m \cdot u$.
Using again Lemma~\ref{lem:aux1}, we exclude situations when $\ell = 0$ and $m \geq 1$ ($m = 0$ and $\ell \geq 1$ resp.).
It therefore remains to deal with the case when both $m$ and $\ell$ are positive. We divide this into four lemmas according to the size of the overlap the prefix $v\cdot y^\ell$ and the suffix $y^m\cdot u$ have in $x$. More exactly, the cases are:
\begin{itemize}
	\item $\abs{v\cdot y^\ell} + \abs{y^m\cdot u} \leq \abs x$
	\item $\abs x < \abs{v\cdot y^\ell} + \abs{y^m\cdot u} \leq \abs x + \abs u$
	\item $\abs x + \abs u < \abs{v\cdot y^\ell} + \abs{y^m\cdot u} < \abs x + \abs {u \cdot v}$
	\item $\abs x + \abs {u\cdot v} \leq \abs{v\cdot y^\ell} + \abs{y^m\cdot u}$
\end{itemize} 
and they are solved by an auxiliary lemma each. The first three cases yield that $u$ and $v$ commute, the first one being a straightforward application of the Periodicity lemma. The last one is also straightforward application of the ``synchronization'' idea. It implies that $x \cdot x$ is a factor of $y^\omega$, a contradiction with the assumption that $x$ and $y$ are primitive and not conjugate.
Consequently, the technical, tedious part of the whole proof is concentrated in lemmas dealing with the second, and the third case (see lemmas {\verb|short_overlap|} and {\verb|medium_overlap|} in the theory \verb|Binary_Square_Interpretation.thy|). The corresponding proofs are further analyzed and decomposed into more elementary claims in the formalization, where further details can be found. 

This completes the proof of $\ws = [x,y,x]$. A byproduct of the proof is the description of words $x$, $y$, $p$ and $s$. Namely, there are non-commuting words $r$ and $t$, and integers $m$, $k$ and $\ell$ such that
\begin{align*}
	x &= (rt)^{m+1}\cdot r, & y &= (tr)^{k+1}\cdot (rt)^{\ell+1}, & p &= (rt)^{k+1}, & s &= (tr)^{\ell+1}\,.
\end{align*}
%
The second claim of the present theorem, that is, $y = s \cdot p$ is then equivalent to $k = \ell$, and it is an easy consequence of the assumption that the interpretation is extendable.  \qed
 
\end{proof}

\section{The witness with two $x$'s} \label{sec:main}

In this section, we characterize 
words witnessing that $\{x,y\}$ is not primitivity preserving and containing at least two $x$'s.

\begin{theorem}[\texttt{bin\_imprim\_longer\_twice}]\label{th:main}
	Let $B = \{x,y\}$ be a code such that $\abs y \leq \abs x$. Let $\ws \in \lists \{x,y\}$ be a primitive word which contains $x$ at least twice 
	such that $\concat \ws$ is imprimitive. 
	
	Then $\ws \sim [x,x,y]$ and both $x$ and $y$ are primitive.
\end{theorem}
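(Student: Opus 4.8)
The plan is to reduce everything to the structural description of disjoint extendable $B$-interpretations of the square $x^2$ furnished by Theorem \ref{thm:sq_interp}. First I would dispose of the preliminary observations. Since $\concat \ws = z^\ell$ is imprimitive with $\ell \geq 2$, and $\ws$ is primitive, the shift by $z$ yields, via Lemma \ref{lem:disjoint_interp}, a disjoint extendable $B$-interpretation of $\concat \ws$; the hypothesis $z \notin \gen B$ holds because $\ws$ is primitive and $B$ is a code. Before invoking the square theorem, however, I must secure its two standing hypotheses about $x$ and $y$: that both are primitive and that they are not conjugate. The non-conjugacy is exactly the content of Lemma \ref{lem:not_conjugate} in the equal-length case, but here I expect to argue it directly from the presence of two $x$'s; the primitivity of $x$ and $y$ will have to be established as part of the argument rather than assumed, since the theorem statement asserts it as a conclusion. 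The natural route is to first handle the degenerate cases where $x$ or $y$ fails to be primitive and show they cannot produce a primitive witness with two $x$'s whose concatenation is imprimitive, thereby reducing to the situation in which Theorem \ref{thm:sq_interp} applies.

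The core step is to locate a factor $x \cdot x$ inside $\concat \ws$ and extract from the global shift-interpretation a disjoint extendable $B$-interpretation of precisely that square $x^2$. Because $\ws$ contains $x$ at least twice, $\concat \ws$ (or a suitable cyclic conjugate of it, using that the statement is only up to conjugacy $\sim$) contains an occurrence of $x \cdot x$ as a factor spanned by the shifted copies. Restricting the global interpretation to the window covering this $x^2$ should give a local $B$-interpretation $p\,(x\cdot x)\,s \sim_{\mathcal I} \ws'$ which inherits disjointness and extendability from the global one. Theorem \ref{thm:sq_interp} then forces $\ws' = [x,y,x]$ together with the relations $s \cdot p = y$ and $p \cdot x = x \cdot s$, and in particular $x \cdot y \cdot x = (p \cdot x)^2$ is an explicit square.

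From here the plan is to propagate this rigid local structure to the whole of $\ws$. The identity $\ws' = [x,y,x]$ says that every occurrence of $x \cdot x$ in $\concat \ws$ is immediately interpreted with a single $y$ wedged between the two $x$'s in the shifted reading, which should pin down the pattern of $\ws$ around each pair of consecutive $x$'s and ultimately force $\ws$ to be conjugate to $[x,x,y]$: two $x$'s and exactly one $y$. Concretely, I would show that the relations $s\cdot p = y$ and $p \cdot x = x\cdot s$ are incompatible with $\ws$ containing more than one $y$ or more than two $x$'s, using a length and counting argument together with the uniqueness supplied by disjointness. Finally, once $\ws \sim [x,x,y]$, the equality $x \cdot y \cdot x = (p\cdot x)^2$ combined with $p\cdot x = x \cdot s$ and $s \cdot p = y$ yields the primitivity of both $x$ and $y$: imprimitivity of either would, through these relations, force $x$ and $y$ to commute, contradicting that $B$ is a code.

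The main obstacle I anticipate is the passage from the single local interpretation of one $x^2$ to the global conclusion $\ws \sim [x,x,y]$. Theorem \ref{thm:sq_interp} is a statement about one square, and leveraging it to exclude all longer witnesses — in particular to rule out $\ws$ of the form with two separated pairs of $x$'s, or with extra $y$'s — requires carefully tracking how the forced local factorization $[x,y,x]$ overlaps with the periodic structure $p\cdot x = x\cdot s$ across the whole word. Managing these overlaps, and ensuring the reduction to a single canonical conjugate of $[x,x,y]$ is exhaustive rather than merely covering the obvious configurations, is exactly the kind of step where a human proof quietly waves its hands and where the formalization does the real work.
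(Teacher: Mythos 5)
The decisive gap is the very first move of your core step: from ``$\ws$ contains $x$ at least twice'' you cannot conclude that $x\cdot x$ occurs as a factor of $\concat \ws$, cyclically or otherwise, in the shifted reading. The two occurrences of $x$ may be separated by $y$'s: $\ws = [x,y,x,y,y]$ is primitive and contains $x$ twice, yet $[x,x]$ is not a cyclic factor of $\ws$, and for generic $x,y$ the word $x \cdot x$ is not a factor of $\concat \ws$ at all. Such witnesses are excluded by the conclusion of Theorem~\ref{th:main}, but that is exactly what must be proved; your plan silently assumes the two $x$'s are adjacent. The paper never applies Theorem~\ref{thm:sq_interp} to an arbitrary witness: it is used only inside Lemma~\ref{lem:all_primitive}, whose additional hypothesis is precisely that $[x,x]$ is a cyclic factor of $\ws$, and the general case is reached by induction on $\abs \ws$ using the gluing operation of Section~\ref{sec:gluing} --- if $y$ occurs at least twice, Lemma~\ref{lem:two_squares} guarantees one of $[x,x]$, $[y,y]$ is missing as a cyclic factor, one glues on that letter to get a strictly shorter witness over the code $\{x'\cdot y',y'\}$, applies the induction hypothesis, and kills the result with the lemma \texttt{imprim\_ext\_suf\_comm}. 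Nothing in your proposal plays this role. Your treatment of primitivity is also circular as sketched: the relations $s \cdot p = y$ and $p \cdot x = x \cdot s$ exist only after Theorem~\ref{thm:sq_interp} has been invoked, and its hypotheses already include the primitivity of both $x$ and $y$, so these relations cannot be the source of that primitivity. The paper instead passes to primitive roots via the morphism $\R$ --- noting the trap that $\abs{\rho\, x} < \abs{\rho\, y}$ may occur, which flips the length hypothesis and forces the symmetric formulation of Lemma~\ref{lem:two_squares} rather than a direct reduction --- and obtains primitivity of $x$ from the equation $x^2 y = z^2$ (case~\ref{C} of Theorem~\ref{thm:xjykzl_solution}).

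On the local-to-global step, you rightly flag the passage from $p\,(x\cdot x)\,s \sim_{\mathcal I} [x,y,x]$ to $\ws \sim [x,x,y]$ as the main obstacle, but ``a length and counting argument together with the uniqueness supplied by disjointness'' is not an argument, and the difficulty is not occurrence-counting: what must be shown is $z = x \cdot p$, where $\concat \ws = z^k$. The paper does this by contradiction through a longest-common-prefix computation (Lemma~\ref{lem:lcp}) applied simultaneously to the two binary codes $\{\ws, [x,x,y]\}$ and $\{z, x p\}$, yielding $z \cdot \concat \vs_1 = \concat \vs_2$ for prefixes $\vs_1, \vs_2$ of powers of $\ws$, contradicting Lemma~\ref{lem:disjoint_interp}; once $z = x p$, primitivity of $\ws$ and of $[x,x,y]$ plus the code property give $\ws = [x,x,y]$ at once. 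In sum, your skeleton (shift interpretation plus the square theorem) matches the paper's core case, but the two ideas that actually carry the proof --- the lcp/stability argument closing the core case, and the gluing induction with the primitive-root reduction that remove the adjacency and primitivity hypotheses --- are missing, and the first of your steps fails outright without them.
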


We divide the proof in three steps.

\subsubsection{The core case.}
We first prove the claim with two additional assumptions which will be subsequently removed. Namely, the following lemma shows how the knowledge about the $B$-interpretation 
of $x \cdot x$ from the previous section is used. The additional assumptions are displayed as items.
\begin{lemma}[\texttt{bin\_imprim\_primitive}] \label{lem:all_primitive}
	Let $B = \{x,y\}$ be a code with $\abs y \leq \abs x$ where
	\begin{itemize}
		\item both $x$ and $y$ are primitive,
	\end{itemize}
  and let $\ws \in \lists B$ be primitive such that $\concat \ws$ is imprimitive, and 
  \begin{itemize}
  	\item $[x,x]$ is a cyclic factor of  $\ws$.
  \end{itemize}
 Then $\ws \sim [x,x,y]$.  
\end{lemma}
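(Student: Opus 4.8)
The plan is to reduce by conjugation to the case $[x,x]\leq_p \ws$, to extract from the imprimitivity of $\concat\ws$ a disjoint extendable $B$-interpretation of the square $x\cdot x$, to feed this into Theorem~\ref{thm:sq_interp}, and finally to use the resulting rigid structure together with the primitivity of $\ws$ to pin $\ws$ down exactly.

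Since the conclusion $\ws\sim[x,x,y]$ and all the hypotheses are invariant under conjugation, and $[x,x]$ is a cyclic factor of $\ws$, I would first replace $\ws$ by a suitable conjugate so that $[x,x]\leq_p\ws$, writing $\ws=[x,x]\cdot\mathbf t$. As $\concat\ws$ is imprimitive we have $\concat\ws=z^k$ for a primitive $z$ and $k\geq 2$, and $z\notin\gen B$: otherwise $z=\concat\zs$ with $\zs\in\lists B$, which would force $\ws=\zs^{k}$ because $B$ is a code, contradicting the primitivity of $\ws$. Applying Lemma~\ref{lem:disjoint_interp} with $\ws_1=\ws_2=\ws$ and shift $z$ (so that $z\cdot\concat\ws=\concat\ws\cdot z$ holds trivially) then yields a disjoint extendable $B$-interpretation $p\,(x\cdot x)\,s\sim_{\mathcal I}\vs$ of the prefix $x\cdot x$.

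To invoke Theorem~\ref{thm:sq_interp} I would check that $x$ and $y$ are not conjugate: this is immediate when $\abs y<\abs x$, as conjugate words share their length, and it follows from Lemma~\ref{lem:not_conjugate} when $\abs y=\abs x$, whose hypotheses hold since $\ws$ is primitive, of length at least two (it contains $[x,x]$), with $\concat\ws$ imprimitive. As $x,y$ are primitive, non-conjugate and $\abs y\leq\abs x$, the theorem gives $\vs=[x,y,x]$, $s\cdot p=y$ and $p\cdot x=x\cdot s$, whence $x\cdot y\cdot x=(p\cdot x)^2$ is imprimitive. This already shows that $[x,y,x]$ is a cyclic factor of $\ws$ and that $x^2y$ (a conjugate of $xyx$) is imprimitive, consistent with the target pair $(j,k)=(2,1)$.

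It remains to conclude $\mathbf t=[y]$, i.e.\ $\ws=[x,x,y]$. I would aim to show that $\concat\ws$ is conjugate to a power of $p\cdot x$ and then read off, by comparing lengths against $x\cdot y\cdot x=(p\cdot x)^2$, that $\ws$ carries exactly twice as many occurrences of $x$ as of $y$; the goal thereby reduces to proving that $y$ occurs in $\ws$ exactly once. I expect this last point to be the main obstacle: the coarse length and cyclic-factor data do not by themselves exclude longer primitive candidates such as $[x,x,x,y,x,y]$. Ruling these out should exploit the full rigidity of the disjoint interpretation---that the $z$-shifted factorization of $\ws$ never synchronizes with the original one---to propagate the local pattern $[x,y,x]$ around the whole cyclic word, forcing any second occurrence of $y$ either to make $x$ and $y$ commute (contradicting that $B$ is a code) or to violate the primitivity of $\ws$. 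This is exactly where the heavy combinatorial bookkeeping, and plausibly the ``gluing'' idea, is concentrated.
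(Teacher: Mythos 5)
Your first half coincides exactly with the paper's proof and is correct: conjugating so that $[x,x] \leq_p \ws$, verifying $z \notin \gen B$ via the code property and primitivity of $\ws$, invoking Lemma~\ref{lem:disjoint_interp} with $\ws_1 = \ws_2 = \ws$, settling non-conjugacy of $x$ and $y$ (trivially when $\abs y < \abs x$, by Lemma~\ref{lem:not_conjugate} when lengths agree), and applying Theorem~\ref{thm:sq_interp} to obtain $p\,(x \cdot x)\, s \sim_{\mathcal I} [x,y,x]$ with $s \cdot p = y$ and $p \cdot x = x \cdot s$. But the proof is incomplete at precisely the point you flag as the main obstacle, and your sketch for closing it would not work as stated. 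The length comparison against $x \cdot y \cdot x = (p \cdot x)^2$ does not ``read off'' that $x$ occurs twice as often as $y$: from $\abs p = \abs s$ and $\abs y = 2\abs p$ the length balance is a single linear equation in integers that admits many occurrence patterns besides two $x$'s per $y$. And ``propagating the local pattern $[x,y,x]$ around the cyclic word'' is a plan, not an argument. Your guess that gluing is the missing tool is also misplaced: in the paper, gluing enters only later, in the induction proving Theorem~\ref{th:main}, to remove the cyclic-square hypothesis; it plays no role in this lemma.

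The paper's actual finish is different and crisper: it proves $z = x \cdot p$, after which everything collapses. Supposing $z \neq x\cdot p$, the primitive words $z$ and $xp$ do not commute, so one has two binary codes, $\{x,y\}$ and $\{z, xp\}$. Writing $\ps$ for the prefix of $\ws$ with $\concat \ps \cdot p = z$ (so that $\concat(\ps \cdot [x,y]) = z \cdot (x p)$ and $\concat [x,x,y] = (xp)^2$), Lemma~\ref{lem:lcp} computes the longest common prefix $\alpha_{z,xp} = z \cdot xp \wedge_p xp \cdot z$ in two ways: once from the pair $\ws \cdot [x,x,y]$ versus $[x,x,y] \cdot \ws$, giving $\alpha_{z,xp} = \concat \vs_1 \cdot \alpha_{x,y}$, and once from $\ws \cdot \ps \cdot [x,y]$ versus $\ps \cdot [x,y] \cdot \ws$, giving $z \cdot \alpha_{z,xp} = \concat \vs_2 \cdot \alpha_{x,y}$, where $\vs_1$ and $\vs_2$ are the respective list-level longest common prefixes and hence prefixes of powers of $\ws$. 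Cancelling $\alpha_{x,y}$ yields $z \cdot \concat \vs_1 = \concat \vs_2$, which is exactly the configuration forbidden by the first (non-synchronization) part of Lemma~\ref{lem:disjoint_interp}. Hence $z = xp$, so $\concat \ws = (xp)^k$ commutes with $\concat [x,x,y] = (xp)^2$; since $B$ is a code and both $\ws$ and $[x,x,y]$ are primitive lists, this forces $\ws = [x,x,y]$. Some argument of this kind --- forcing the $z$-shift to coincide with the $xp$-shift rather than propagating patterns occurrence by occurrence --- is the ingredient your proposal is missing.
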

\begin{proof}
	Choosing a suitable conjugate of $\ws$, we can suppose, without loss of generality, that $[x,x]$ is a prefix of $\ws$.
	Now, we want to show $\ws = [x,x,y]$.
	Proceed by contradiction and assume $\ws \neq [x,x,y]$.
	Since $\ws$ is primitive, this implies $\ws \cdot [x,x,y] \neq [x,x,y] \cdot \ws$.

	By Lemma~\ref{lem:not_conjugate}, we know that $x$ and $y$ are not conjugate.
	Let $\concat \ws = z^k$, $2 \leq k$ and $z$ primitive.
Lemma~\ref{lem:disjoint_interp} yields a disjoint extendable $B$-interpretation of $(\concat \ws)^2$. In particular, the induced disjoint extendable $B$-interpretation of the prefix $x \cdot x$ is of the form $p\, (x \cdot x)\, s \sim_{\mathcal I} [x,y,x]$ by Theorem~\ref{thm:sq_interp}:
\[
\begin{tikzpicture}[very thick, scale = .85]
	\draw[dotted,thin] (0,-.5) rectangle (2,0); \node at (1,-.25) {$z$};
	\draw (2,-.5) rectangle (8,0); \node at (5,-.25) {};
	\draw[thin] (2,-.5) rectangle (3.5,0); \node at (2.75,-.25) {$x$};
	\draw[thin] (3.5,-.5) rectangle (5,0); \node at (4.25,-.25) {$x$};
	\draw[thin] (1.5,0) rectangle (3,.5); \node at (2.25,.25) {$x$};
	\draw[thin] (4,0) rectangle (5.5,.5); \node at (4.75,.25) {$x$};
	\node at (3.5,.25) {$y$};
	\draw (0,0) rectangle (6,.5); \node at (3,.25) {};
    \draw (6,0) rectangle (12,.5); \node at (9,.25) {$\concat \ws$};
    \draw (8,-.5) rectangle (14,.0); \node at (11,-.25) {$\concat \ws$};
    \draw [thin] (1.5,-.5) to[bend right = 45] node[below] {$p$} (2, -.5); 
    \draw [thin] (5,-.5) to[bend right = 45] node[below] {$s$} (5.5, -.5);     
     \draw [thin] (3,.5) to[bend left = 45] node[above] {$s$} (3.5, .5); 
     \draw [thin] (3.5,.5) to[bend left = 45] node[above] {$p$} (4, .5);   
\end{tikzpicture}
\]
Let $\ps$ be the prefix of $\ws$ such that $\concat \ps \cdot p = z$. Then 
\begin{equation*}
	\concat (\ps \cdot [x,y])  = z \cdot (x \cdot p),  \quad	\concat [x,x,y] = (x \cdot p)^2,  \quad
	\concat \ws = z^k, 
\end{equation*}
and we want to show $z = xp$, which will imply $\concat ([x,x,y]\cdot \ws) = \concat (\ws \cdot [x,x,y])$, hence $\ws = [x,x,y]$ since $\{x,y\}$ is a code, and both $\ws$ and $[x,x,y]$ are primitive.
 
Again, proceed by contradiction, and assume $z \neq xp$. Then, since both $z$ and $x\cdot p$ are primitive, they do not commute. We now have two binary codes, namely $\{\ws,[x,x,y]\}$ and $\{z,xp\}$. The following two equalities, \eqref{eq:lcp1} and \eqref{eq:lcp2} exploit the fundamental property of longest common prefixes of elements of binary codes mentioned in Section \ref{sec:notation}. In particular, we need the following lemma:

\begin{lemma}[\texttt{bin\_code\_lcp\_concat}]\label{lem:lcp}
Let $X = \{u_0,u_1\}$ be a binary code, and let $\zs_0,\zs_1 \in \lists X$ be such that $\concat \zs_0$ and $\concat \zs_1$ are not prefix-comparable. Then
\[(\concat \zs_0) \wedge_p (\concat \zs_1) = \concat (\zs_0 \wedge_p \zs_1) \cdot (u_0 \wedge u_1).\]
\end{lemma}
See Appendix \ref{sec:appendix} for more comments on this property.
Denote $\alpha_{z,xp} = z \cdot xp \wedge_p xp \cdot z$. 
Then also $\alpha_{z,xp} = z^k \cdot (xp)^2 \wedge_p (xp)^2 \cdot z^k$.
Similarly, let $\alpha_{x,y} = x \cdot y \wedge_p y \cdot x$. Then Lemma \ref{lem:lcp} yields
\begin{equation}
\begin{aligned} \label{eq:lcp1}
	\alpha_{z,xp} &= \concat (\ws \cdot [x,x,y]) \wedge_p \concat ([x,x,y] \cdot \ws) = \\
	&= \concat (\ws \cdot [x,x,y] \wedge_p  [x,x,y] \cdot \ws) \cdot \alpha_{x,y}
\end{aligned}
\end{equation}
and also
\begin{equation}
	\begin{aligned}\label{eq:lcp2}
	z \cdot \alpha_{z,xp} = &\concat (\ws \cdot \ps \cdot [x,y]) \wedge_p \concat (\ps\cdot[x,y] \cdot \ws) = \\
	= &\concat (\ws \cdot \ps \cdot [x,y] \wedge_p \ps\cdot[x,y] \cdot \ws) \cdot \alpha_{x,y}.
\end{aligned}
\end{equation}
Denote
\begin{align*}
	\vs_1 &= \ws \cdot [x,x,y] \wedge_p [x,x,y] \cdot \ws,  &	\vs_2 &= \ws \cdot \ps \cdot [x,y] \wedge_p \ps\cdot[x,y] \cdot \ws.
\end{align*} 
From \eqref{eq:lcp1} and \eqref{eq:lcp2} we now have
$	z \cdot \concat \vs_1 = \concat \vs_2$. Since $\vs_1$ and $\vs_2$ are prefixes of some $\ws^n$, we have 
a contradiction with Lemma \ref{lem:disjoint_interp}. 
\qed
\end{proof}

\subsubsection{Dropping the primitivity assumption.} 
We first deal with the situation when $x$ and $y$ are not primitive. A natural idea is to consider the primitive roots of $x$ and $y$ instead of $x$ and $y$. This means that we replace the word 
$\ws$ with $\R \ws$, where $\R$ is the morphism mapping $[x]$ to $[\rho\, x]^{e_x}$ and $[y]$ to $[\rho\, y]^{e_y}$ where $x = (\rho\, x)^{e_x}$ and $y = (\rho\, y)^{e_y}$. For example, if $x = abab$ and $y = aa$, and $\ws = [x,y,x] = [abab,aa,abab]$, then $\R \ws = [ab,ab,a,a,ab,ab]$. 

Let us check which hypotheses of Lemma \ref{lem:all_primitive} are satisfied in the new setting, that is, for the code $\{\rho\,x,\rho\,y\}$ and the word $\R \ws$. 
The following facts are not difficult to see.
	\begin{itemize}
		\item $\concat \ws = \concat (\R \ws)$;
		\item if $[c,c]$, $c\in \{x,y\}$, is a cyclic factor $\ws$, then $[\rho\,c,\rho\,c]$ is a cyclic factor of $\R \ws$.
	\end{itemize}
The next required property: 
\begin{itemize}
	\item if $\ws$ is primitive, then $\R \ws$ is primitive;
\end{itemize}
deserves more attention. It triggered another little theory of our formalization which can be found in locale \verb|sings_code|. Note that it fits well into our context, since the claim is that $\R$ is a primitivity preserving morphism, which implies that its image on the singletons $[x]$ and $[y]$ forms a primitivity preserving set of words, see theorem \verb|code.roots_prim_morph|.


Consequently, the only missing hypothesis preventing the use of Lemma \ref{lem:all_primitive} is $\abs y \leq \abs x$ since it may happen that $\abs {\rho\, x} < \abs{\rho\,y}$. In order to solve this difficulty, we shall ignore for a while the length difference between $x$ and $y$, and obtain the following intermediate lemma.

\begin{lemma}[\texttt{bin\_imprim\_both\_squares}, \texttt{bin\_imprim\_both\_squares\_prim}]\label{lem:two_squares}
	Let $B = \{x,y\}$ be a code, and let $\ws \in \lists B$ be a primitive word such that $\concat \ws$ is imprimitive.
	Then $\ws$ cannot contain both $[x,x]$ and $[y,y]$ as cyclic factors.
\end{lemma}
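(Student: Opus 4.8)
The plan is to argue by contradiction and to reduce to the already-established core case, Lemma~\ref{lem:all_primitive}, by passing to primitive roots. Assume that $\ws$ contains both $[x,x]$ and $[y,y]$ as cyclic factors. Apply the morphism $\R$ that sends $[x]$ to $[\rho\,x]^{e_x}$ and $[y]$ to $[\rho\,y]^{e_y}$, obtaining the word $\R\ws \in \lists \{\rho\,x,\rho\,y\}$. Since $\{x,y\}$ is a code we have $\rho\,x \neq \rho\,y$, so $\{\rho\,x,\rho\,y\}$ is again a code, now consisting of two primitive words. Using the three transfer properties of $\R$ recorded just before the lemma, the new word $\R\ws$ is primitive, its concatenation $\concat(\R\ws) = \concat\ws$ is imprimitive, and both $[\rho\,x,\rho\,x]$ and $[\rho\,y,\rho\,y]$ are cyclic factors of $\R\ws$.

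We have thus transferred the whole situation to a code whose two elements are primitive, which is exactly the setting of Lemma~\ref{lem:all_primitive}. That lemma requires the doubled letter to be the longer one, so, by symmetry, assume without loss of generality that $\abs{\rho\,y} \leq \abs{\rho\,x}$; otherwise we simply exchange the roles of $x$ and $y$, which is legitimate since the hypotheses are symmetric in $x$ and $y$. Applying Lemma~\ref{lem:all_primitive} to the code $\{\rho\,x,\rho\,y\}$ and the word $\R\ws$, with $[\rho\,x,\rho\,x]$ serving as the required cyclic factor, yields $\R\ws \sim [\rho\,x,\rho\,x,\rho\,y]$.

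It then remains to observe that $[\rho\,y,\rho\,y]$ cannot be a cyclic factor of $[\rho\,x,\rho\,x,\rho\,y]$: reading this list cyclically, the only consecutive pairs are $(\rho\,x,\rho\,x)$, $(\rho\,x,\rho\,y)$ and $(\rho\,y,\rho\,x)$, so $\rho\,y$ never occurs twice in a row. Since conjugate lists have the same cyclic factors, $[\rho\,y,\rho\,y]$ is not a cyclic factor of $\R\ws$ either, contradicting what was derived above. This contradiction shows that $\ws$ cannot contain both $[x,x]$ and $[y,y]$ as cyclic factors.

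I expect the only genuine subtlety to be the reduction via $\R$: one must invoke exactly the transfer properties (preservation of the concatenation, of primitivity, and of doubled cyclic factors) together with $\rho\,x \neq \rho\,y$ to guarantee that $\{\rho\,x,\rho\,y\}$ is a valid code satisfying all hypotheses of the core lemma, even though the original length assumption $\abs y \leq \abs x$ need not survive the passage to roots. Once this reduction is in place, the final combinatorial step is immediate.
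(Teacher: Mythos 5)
Your proof is correct and takes essentially the same approach as the paper: pass to primitive roots via $\R$, check the transfer properties (code, primitivity, concatenation, doubled cyclic factors), apply Lemma~\ref{lem:all_primitive}, and conclude by the observation that a conjugate of $[\rho\,x,\rho\,x,\rho\,y]$ cannot contain both squares. The paper's version is merely terser, folding your explicit without-loss-of-generality step on $\abs{\rho\,x}$ versus $\abs{\rho\,y}$ into the statement that $\R\ws$ is conjugate with $[\rho\,x,\rho\,x,\rho\,y]$ or $[\rho\,y,\rho\,y,\rho\,x]$.
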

\begin{proof}
	Assume that $\ws$ contains both $[x,x]$ and $[y,y]$ as cyclic factors.
	
	Consider the word $\R \ws$ and the code $\{\rho\, x,\rho\, y\}$. Since $\R \ws$ contains both $[\rho\,x,\rho\,x]$ and $[\rho\,y,\rho\,y]$, Lemma \ref{lem:all_primitive}
	implies that $\R \ws$ is conjugate either with the word $[\rho\,x,\rho\,x,\rho\,y]$ or with $[\rho\,y,\rho\,y,\rho\,x]$, which is a contradiction with the assumed presence of both squares.  
  \qed
\end{proof}

\subsubsection{Concluding the proof by gluing.}\label{sec:gluing}
It remains to deal with the existence of squares. We use an idea that is our main innovation with respect to the proof from \cite{lerest}, and contributes significantly to the reduction of length of the proof, and hopefully also to its increased clarity. 
Let $\ws$ be a list over a set of words $X$. The idea is to choose one of the words, say $u \in X$, and 
to concatenate (or ``glue'') blocks of $u$'s to words following them. For example, if $\ws = [u,v,u,u,z,u,z]$, then the resulting list is $[uv,uuz,uz]$. This procedure is in the general case well defined on lists whose last ``letter'' is not the chosen one and it leads to a new alphabet $\{u^i\cdot v \mid v \neq u\}$ which is a code if and only if $X$ is. This idea is used in an elegant proof of the Graph lemma (see \cite{itp2021} and \cite{Berstel1979}). In the binary case, which is of interest here, if $\ws$ in addition does not contain a square of a letter, say $[x,x]$, then 
the new code $\{x\cdot y, y\}$ is again binary. Moreover, the resulting glued list $\ws'$ has the same concatenation, and it is primitive if (and only if) $\ws$ is. Note that gluing is in this case closely related to the Nielsen transformation $y \mapsto x^{-1}y$ known from the theory of automorphisms of free groups.

Induction on $\abs \ws$ now easily leads to the proof of Theorem \ref{th:main}.
\begin{proof}[of Theorem \ref{th:main}]
	If $\ws$ contains $y$ at most once, then we are left with the equation $x^j\cdot y = z^\ell$, $\ell \geq 2$. The equality $j = 2$ follows from the Periodicity lemma, see Case 2 in the proof of Theorem \ref{thm:xjykzl_solution}.
	
	Assume for contradiction that $y$ occurs at least twice in $\ws$. Lemma \ref{lem:two_squares} implies that at least one square, $[x,x]$ or $[y,y]$ is missing as a cyclic factor.
	Let $\{x',y'\} = \{x,y\}$ be such that, that $[x',x']$ is not a cyclic factor of $\ws$. We can therefore perform the gluing operation, and obtain a new, strictly shorter word $\ws' \in \lists \{x' \cdot y', y'\}$.  The longer element $x'\cdot y'$ occurs at least twice in $\ws'$, since the number of its occurrences in $\ws'$ is the same as the number of occurrences of $x'$ in $\ws$, the latter word containing both letters at least twice by assumption.
	Moreover, $\ws'$ is primitive, and $\concat \ws' = \concat \ws$ is imprimitive. Therefore, by induction on $\abs \ws$, we have $\ws' \sim [x'\cdot y',x' \cdot y', y']$. In order to show that this is not possible we can successfully reuse the lemma \verb|imprim_ext_suf_comm| mentioned in the proof of Lemma \ref{le:jk_unique}, this time for $u = x'y'x'$ and $v = y'$. The words $u$ and $v$ do not commute because $x'$ and $y'$ do not commute. Since $uv$ is imprimitive,  the word $uvv \sim \concat \ws'$ is primitive. \qed
\end{proof}

This also completes the proof of our main target, Theorem \ref{th:Theorem1}. 
%
%
\section{Additional notes on the formalization}

The formalization is a part of an evolving combinatorics on words formalization project.
It relies on its backbone session, called CoW, a version of which is also available in the Archive of Formal Proofs \cite{Combinatorics_Words-AFP}.
This session covers basics concepts in combinatorics on words including the Periodicity lemma.
An overview is available in~\cite{itp2021}.

The evolution of the parent session CoW continued along with the presented results and its latest stable version is available at our repository \cite{CoW_gitlab_v1_6}.
The main results are part of another Isabelle session CoW\_Equations, which, as the name suggests, aims at dealing with word equations.
We have greatly expanded its elementary theory \verb|Equations_Basic.thy| which provides auxiliary lemmas and definitions related to word equations.
Noticeably, it contains the definition \verb|factor_interpretation| (Definition~\ref{def:interpretation}) and related facts.

Two dedicated theories were created: \verb|Binary_Square_Interpretation.thy| and  \verb|Binary_Code_Imprimitive.thy|.
The first contains lemmas and locales dealing with $\{x,y\}$-interpretation of the square $xx$ (for $\abs y \leq \abs x$), culminating in Theorem~\ref{thm:sq_interp}.
The latter contains Theorems~\ref{th:Theorem1}~and~\ref{th:main}.

Another outcome was an expansion of formalized results related to the Lyndon-Sch\"utzenberger theorem.
This result, along with many useful corollaries, was already part of the backbone session CoW, and it was newly supplemented with the parametric solution of the equation $x^jy^k = z^\ell$, specifically Theorem~\ref{thm:xjykzl_solution} and Lemma~\ref{le:jk_unique}.
This formalization is now part of CoW\_Equations in the theory \verb|Lyndon_Schutzenberger.thy|.

Similarly, the formalization of the main results triggered a substantial expansion of existing support for the idea of gluing as mentioned in Section~\ref{sec:gluing}.
Its reworked version is now in a separate theory called \verb|Glued_Codes.thy| (which is part of the session CoW\_Graph\_Lemma).

Let us give a few concrete highlights of the formalization.
A very useful tool, which is part of the CoW session, is the \verb|reversed| attribute.
The attribute produces a symmetrical fact where the symmetry is induced by the mapping \isakw{rev}, i.e., the mapping which reverses the order of elements in a list. 
For instance, the fact stating that if $p$ is a prefix of $v$, then $p$ a prefix of $v \cdot w$, is transformed by the reversed attribute into the fact saying that if $s$ is suffix of $v$, then $s$ is a suffix of $w \cdot v$.
The attribute relies on ad hoc defined rules which induce the symmetry.
In the example, the main reversal rule is 
\begin{quote}
 \ {\isachardoublequoteopen}{\isacharparenleft}{\kern0pt}rev\ u\ {\isasymle}p\ rev\ v{\isacharparenright}{\kern0pt}\ {\isacharequal}{\kern0pt}\ u{\isasymle}s\ v{\isachardoublequoteclose}
\end{quote}
\noindent The attribute is used frequently in the present formalization.
For instance, Figure \ref{isa:reversed} shows the formalization of the proof of Cases 1 and 2 of Theorem~\ref{le:jk_unique}. 
Namely, the proof of Case 2 is smoothly deduced from the lemma that deals with Case 1, avoiding writing down the same proof again up to symmetry.

\begin{figure}[t]
\begin{subfigure}{0.49\textwidth}
\begin{tikzpicture}
\node at (0,0)
{
\begin{isaframe}
	\ \ \isacommand{proof}\isamarkupfalse%
	{\isacharparenleft}{\kern0pt}cases{\isacharparenright}{\kern0pt}\isanewline
	\ \ \ \ \isacommand{case}\isamarkupfalse%
	\ {\isadigit{1}}\isanewline
	\ \ \ \ \isacommand{then}\isamarkupfalse%
	\ \isacommand{show}\isamarkupfalse%
	\ {\isacharquery}{\kern0pt}thesis\isanewline
	\ \ \ \ \ \ \isacommand{using}\isamarkupfalse%
	\ LS{\isacharunderscore}{\kern0pt}unique{\isacharunderscore}{\kern0pt}same\isanewline
	\ \ \ \ \ \ \ \ assms{\isacharparenleft}{\kern0pt}{\isadigit{1}}{\isacharcomma}{\kern0pt}\ {\isadigit{4}}{\isacharminus}{\kern0pt}{\isadigit{8}}{\isacharparenright}{\kern0pt}\ \isacommand{by}\isamarkupfalse%
	\ blast \isanewline
	\ \ \isacommand{next}\isamarkupfalse%
	\isanewline
	\ \ \ \ \isacommand{case}\isamarkupfalse%
	\ {\isadigit{2}}\isanewline
	\ \ \ \ \isacommand{then}\isamarkupfalse%
	\ \isacommand{show}\isamarkupfalse%
	\ {\isacharquery}{\kern0pt}thesis\ \isanewline
	\ \ \ \ \ \ \isacommand{using}\isamarkupfalse%
	\ LS{\isacharunderscore}{\kern0pt}unique{\isacharunderscore}{\kern0pt}same{\isacharbrackleft}{\kern0pt}reversed{\isacharbrackright}{\kern0pt}\isanewline
	\ \ \ \ \ \ \ \ assms{\isacharparenleft}{\kern0pt}{\isadigit{1}}{\isacharcomma}{\kern0pt}\ {\isadigit{3}}{\isacharcomma}{\kern0pt}\ {\isadigit{5}}{\isacharminus}{\kern0pt}{\isadigit{8}}{\isacharparenright}{\kern0pt}\ \isacommand{by}\isamarkupfalse%
	\ blast
\end{isaframe}
};
\end{tikzpicture}
\caption{Using the \texttt{reversed} attribute to solve symmetric cases.}
\end{subfigure}
\hfill
\begin{subfigure}{0.49\textwidth}
\begin{tikzpicture}
\node at (6,2) 
{
	\begin{isaframe}
		\ \ \isacommand{have}\isamarkupfalse%
		\ {\isachardoublequoteopen}primitive\ {\isacharbrackleft}{\kern0pt}x{\isacharcomma}{\kern0pt}x{\isacharcomma}{\kern0pt}y{\isacharbrackright}{\kern0pt}{\isachardoublequoteclose}\ \isanewline
		\ \ \ \ \isacommand{using}\isamarkupfalse%
		\ {\isacartoucheopen}x\ {\isasymnoteq}\ y{\isacartoucheclose}\isanewline 
		\quad\isacommand{by}\isamarkupfalse%
		\ primitivity{\isacharunderscore}{\kern0pt}inspection
	\end{isaframe}
};
\node at (6,0)
{
	\begin{isaframe}
\ \ \isacommand{from}\isamarkupfalse%
\ {\isacartoucheopen}\isactrlbold {\isacharbar}{\kern0pt}ws\isactrlbold {\isacharbar}{\kern0pt}\ {\isacharequal}{\kern0pt}\ {\isadigit{3}}{\isacartoucheclose}\ {\isacartoucheopen}ws\ {\isasymin}\ lists\ {\isacharbraceleft}{\kern0pt}x{\isacharcomma}{\kern0pt}y{\isacharbraceright}{\kern0pt}{\isacartoucheclose}\isanewline {\isacartoucheopen}x\ {\isasymnoteq}\ y{\isacartoucheclose} 
{\isacartoucheopen}{\isacharbrackleft}{\kern0pt}x{\isacharcomma}{\kern0pt}\ x{\isacharbrackright}{\kern0pt}\ {\isasymle}f\ ws\ {\isasymcdot}\ ws{\isacartoucheclose}\isanewline {\isacartoucheopen}{\isacharbrackleft}{\kern0pt}y{\isacharcomma}{\kern0pt}\ y{\isacharbrackright}{\kern0pt}\ {\isasymle}f\ ws\ {\isasymcdot}\ ws{\isacartoucheclose}\isanewline
\ \ \isacommand{show}\isamarkupfalse%
\ False\isanewline
\ \ \ \isacommand{by}\isamarkupfalse%
\ list{\isacharunderscore}{\kern0pt}inspection\ simp{\isacharunderscore}{\kern0pt}all
	\end{isaframe}
};
\node at (6, -2)
{
	\begin{isaframe}
		\ \ \isacommand{from}\isamarkupfalse%
		\ {\isacartoucheopen}p\ {\isasymcdot}\ t\ {\isasymcdot}\ s\ {\isacharequal}{\kern0pt}\ t\ {\isasymcdot}\ t\ {\isasymcdot}\ p{\isacartoucheclose}\isanewline
		\ \ \isacommand{have}\isamarkupfalse%
		\ {\isachardoublequoteopen}p\ {\isasymcdot}\ t\ {\isacharequal}{\kern0pt}\ t\ {\isasymcdot}\ p{\isachardoublequoteclose}\isanewline
		\ \ \ \ \isacommand{by}\isamarkupfalse%
		\ mismatch
	\end{isaframe}	
};

\end{tikzpicture}
\caption{Methods \texttt{primitivity\_inspection}, \texttt{list\_inspection} and \texttt{mismatch}.}
\end{subfigure}
\caption{Highlights from the formalization in Isabelle/HOL.}\label{isa:reversed}
\end{figure}

To be able to use this attribute fully in the formalization of main results, it needed to be extended to be able to deal with elements of type \texttt{{\isacharprime}a\ list list}, as the constant \verb|factor_interpretation| is of the function type over this exact type.
The new theories of the session CoW\_Equations contain almost 50 uses of this attribute.

The second highlight of the formalization is the use of simple but useful proof methods.
The first method, called \verb|primitivity_inspection|, is able to show primitivity or imprimitivity of a given word.

Another method named \verb|list_inspection| is used to deal with claims that consist of straightforward verification of some property for a set of words given by their length and alphabet.
For instance, this method painlessly concludes the proof of lemma \verb| bin_imprim_both_squares_prim|. The method divides the goal into eight easy subgoals corresponding to eight possible words. All goals are then discharged by \verb|simp_all|.


The last method we want to mention is \verb|mismatch|.
It is designed to prove that two words commute using the property of a binary code mentioned in Section \ref{sec:notation} and explained in the Appendix \ref{sec:appendix}. Namely, if a product of  words from $\{x,y\}$ starting with $x$ shares a prefix of length at least $\abs{xy}$ with another product of words from $\{x,y\}$, this time starting with $y$, then $x$ and $y$ commute.
Examples of usage of the attribute \texttt{reversed} and all three methods are given in Figure~\ref{isa:reversed}.

\section*{Acknowledgments}

The authors acknowledge support by the Czech Science Foundation grant GA\v CR 20-20621S.

\bibliography{biblio}

\appendix

\section{Background results in combinatorics on words} \label{sec:appendix}

One of the main advantages of formalization is that it allows to tune the level of detail without compromising precision and correctness.
In the present paper we tried to keep the exposition on the level which should make all claims convincing for a researcher in combinatorics on words. This Appendix can help readers who seek more information about the folklore and other well known results.

Most ideas in combinatorics on words are related to periodicity. A word $w$ has a periodic root $r$ if it is a prefix of repeated occurrences of $r$. This can be expressed as $w \leq_p r^\omega$, or equivalently, and using finite words only, as $w \leq_p r \cdot w$, provided $r$ is nonempty. Note that the periodic root $r$ of $w$ need not be primitive, but it is always possible to consider the corresponding primitive root $\rho\, r$, which is also a periodic root of $w$. Note that any word has infinitely many periodic roots since we allow $r$ to be longer than $w$. Nevertheless, a word can have more than one period even if we consider only periods shorter than $|w|$. Such a possibility is controlled by the Periodicity lemma, often called the Theorem of Fine and Wilf (see \cite{FW1965}):

\begin{lemma}[\texttt{per\_lemma\_comm}] \label{perlem}
	If $w$ has a period $u$ and $v$, i.e., $w \leq_p uw$ and $w \leq_p vw$, with $|u|+|v| - \gcd(|u|,|v|) \leq |w|$, then $uv = vu$.
\end{lemma}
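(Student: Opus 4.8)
The plan is to prove the commutation conclusion $uv = vu$ directly by strong induction on $\abs{u} + \abs{v}$, exploiting the fact that the statement is symmetric in $u$ and $v$, so that we may assume without loss of generality $\abs{v} \leq \abs{u}$. The first observation is that the length hypothesis already forces both periodic roots to fit inside $w$: since $\abs{v} - \gcd(\abs{u},\abs{v}) \geq 0$, we get $\abs{u} \leq \abs{u} + \abs{v} - \gcd(\abs{u},\abs{v}) \leq \abs{w}$, and likewise $\abs{v} \leq \abs{w}$. Consequently $w \leq_p uw$ together with $\abs{u} \leq \abs{w}$ gives $u \leq_p w$, and similarly $v \leq_p w$; as $\abs{v} \leq \abs{u}$, the shorter root is a prefix of the longer, so $v \leq_p u$.

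Next I would reduce to a smaller instance. Write $u = v u'$. Since $uv = v u' v$ and $vu = v v u'$, the goal $uv = vu$ is equivalent, after cancelling the leading $v$, to $u'v = v u'$, that is, to the commutation of $u'$ and $v$. If $u' = \varepsilon$ (in particular if $\abs{u} = \abs{v}$, which forces $u = v$) or $v = \varepsilon$, this holds trivially; these are the base cases. Otherwise $\abs{u'} + \abs{v} = \abs{u} < \abs{u} + \abs{v}$, so it remains to supply a witness word to which the induction hypothesis applies for the pair $(u', v)$.

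For the witness I would take $\tilde w$, the prefix of $w$ of length $\abs{w} - \abs{v}$. Because $w$ has period $v$, its first $\abs{w} - \abs{v}$ letters coincide with its last $\abs{w} - \abs{v}$ letters, which is exactly the identification $w = v \tilde w$. Granting this, the two required periodicities of $\tilde w$ follow mechanically: $\tilde w \leq_p v \tilde w$ is inherited from $w \leq_p vw$, and cancelling $v$ in $v\tilde w = w \leq_p uw = v u' w$ yields $\tilde w \leq_p u' w$, whence $\tilde w \leq_p u' \tilde w$. The length condition transfers exactly, since $\abs{\tilde w} = \abs{w} - \abs{v} \geq \abs{u} - \gcd(\abs{u},\abs{v}) = \abs{u'} + \abs{v} - \gcd(\abs{u'},\abs{v})$, using $\abs{u} = \abs{u'} + \abs{v}$ and the Euclidean identity $\gcd(\abs{u} - \abs{v}, \abs{v}) = \gcd(\abs{u},\abs{v})$. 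The induction hypothesis then gives $u'v = vu'$, and hence $uv = vu$.

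The main obstacle I expect is the prefix/suffix identification $w = v\tilde w$: it is the one place where the periodicity of $w$ is used to match a prefix against a suffix, and it is precisely the step that a human proof performs silently under the phrase ``by the Euclidean algorithm.'' Everything else is cancellation of prefixes and arithmetic with $\gcd$. I note that one could instead prove the equivalent ``$w$ has period $\gcd(\abs{u},\abs{v})$'' formulation, but carrying the commutation conclusion through the induction is cleaner here, as it sidesteps the slightly fiddly argument needed to extend the short period from $\tilde w$ back to the full word $w$.
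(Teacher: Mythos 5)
Your proof is correct, and there is nothing in the paper to compare it against: the paper states Lemma~\ref{perlem} only as a background fact (the Fine--Wilf theorem, with a pointer to its library formalization \texttt{per\_lemma\_comm}) and gives no proof in the text. Your argument is the standard Euclidean-descent induction --- replace $(u,v)$ by $(u',v)$ with $u = vu'$, transfer both periodicities and the length bound to the prefix $\tilde w$ of length $\abs{w}-\abs{v}$, using $\gcd(\abs{u}-\abs{v},\abs{v})=\gcd(\abs{u},\abs{v})$ --- and every step checks out, including the compressed inference $\tilde w \leq_p u'w \Rightarrow \tilde w \leq_p u'\tilde w$ (both are prefixes of $u'w$ and $\abs{\tilde w} \leq \abs{u'\tilde w}$); the only nitpick is that your opening inequality $\abs{v} - \gcd(\abs{u},\abs{v}) \geq 0$ silently assumes $v$ nonempty, which is harmless since the case $v = \varepsilon$ is among your trivial base cases.
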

Usually, the weaker test $|u| + |v| \leq  |w| $ is sufficient to indicate that $u$ and $v$ commute.

Conjugation $u \sim v$ of two words is defined by the existence of two words $r$ and $q$ such that $u = rq$ and $v = qr$. Two words are conjugate if they are the same up to rotation, therefore the conjugacy class is naturally seen as a word understood cyclically. A word $z$ is then called a ``cyclic factor'' of $u$ if it is a factor of some conjugate of $u$, which is equivalent to being a factor of the square of $u\cdot u$ if $|z| \leq |u|$.
Importantly, conjugation $u \sim v$ is also characterized as follows:
\begin{lemma}[\texttt{conjugation}]\label{conjugation}
	If $uz = zv$ for nonempty $u$,  then there exists words $r$ and $q$ and an integer $k$ such that
	\[
	u = rq, \quad v = qr \quad \text{ and } \quad z = (rq)^kr.
	\]
\end{lemma}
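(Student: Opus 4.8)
The plan is to argue by induction on $|z|$, peeling copies of $u$ off the front of $z$ in a Euclidean fashion. The starting observation is that, since $u$ is nonempty, both $u$ and $z$ are prefixes of the common word $uz = zv$, hence they are prefix-comparable: the shorter of the two is a prefix of the longer. This synchronization is the only structural fact I need, and it splits the argument into a base case and a recursive step according to whether $|z| < |u|$ or $|z| \geq |u|$.

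In the base case $|z| < |u|$, the word $z$ is a strict prefix of $u$, so I can write $u = z\,t$ with $t$ nonempty. Substituting into $uz = zv$ gives $z\,t\,z = z\,v$, and cancelling the common prefix $z$ yields $v = t\,z$. Setting $r = z$ and $q = t$ then gives exactly $u = rq$, $v = qr$ and $z = (rq)^0 r$, so the claim holds with $k = 0$. This also covers the degenerate case $z = \emp$ (take $r = \emp$, $q = u = v$).

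In the recursive step $|z| \geq |u|$, the word $u$ is a prefix of $z$, so I write $z = u\,z'$. Substituting gives $u\,u\,z' = u\,z'\,v$, and cancelling one leading copy of $u$ produces $u\,z' = z'\,v$, an instance of the very same equation with the strictly shorter word $z'$ (strictly shorter because $u$ is nonempty). The induction hypothesis supplies words $r$, $q$ and an integer $k'$ with $u = rq$, $v = qr$ and $z' = (rq)^{k'} r$; then $z = u\,z' = (rq)(rq)^{k'} r = (rq)^{k'+1} r$, and the claim follows with $k = k'+1$.

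The step I expect to require the most care is the very first one, namely establishing prefix-comparability of $u$ and $z$ and thereby the clean dichotomy on lengths; everything afterwards is routine cancellation. In a formalized setting the only additional subtlety is arranging the recursion as a well-founded induction on $|z|$ (equivalently, strong induction), since each recursive call shrinks $z$ while keeping $u$ and the shape of the equation $uz = zv$ intact.
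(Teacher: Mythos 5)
Your proof is correct. Note that the paper does not actually prove this lemma: it appears in the appendix as a classical background fact (the ``first theorem of Lyndon and Sch\"utzenberger''), with the name \texttt{conjugation} referring to its proof in the underlying formalized library, so there is no in-paper proof to compare against. On its own terms your argument is sound: $u$ and $z$ are both prefixes of the common word $uz = zv$, hence prefix-comparable; the base case $\abs z < \abs u$ yields $u = zt$, $v = tz$ with $r = z$, $q = t$, $k = 0$ (and covers $z = \varepsilon$); and the step cancels one leading $u$, shrinking $\abs z$ by $\abs u \geq 1$, so the well-founded induction goes through. For comparison, the standard direct proof avoids induction by reading $uz = zv$ as $z \leq_p u \cdot z$, i.e., $u$ is a periodic root of $z$; division then gives $z = u^k r$ with $r$ a proper prefix of $u$, and writing $u = rq$ and cancelling in $(rq)^{k+1} r = (rq)^k r v$ yields $v = qr$ at once. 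Your Euclidean recursion is exactly the unrolled form of that division; it buys a proof whose only ingredients are prefix comparability and cancellation, which is arguably the more natural shape for a formalization, at the cost of packaging the quotient-and-remainder structure implicitly in the induction rather than stating it explicitly.
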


We have said that $w$ has a periodic root $r$ if it is a prefix of $r^\omega$. If $w$ is a factor, not necessarily a prefix, of $r^\omega$, then it has a periodic root which is a conjugate of $r$. In particular, if $\abs u = \abs v$, then $u \sim v$ is equivalent to $u$ and $v$ being mutually factors of a power of the other word.

Commutation of two words, that is, equality $u \cdot v = v \cdot u$, is characterized as follows: 
\begin{lemma}[\texttt{comm}]\label{commutation}
	If $xy = yx$ if and only if $x = t^k$ and $y = t^m$ for some word $t$ and some integers $k,m \geq 0$.
\end{lemma}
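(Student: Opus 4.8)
The plan is to prove the two implications separately, with the reverse direction being immediate and the forward direction carrying the real content. If $x = t^k$ and $y = t^m$, then $xy = t^{k+m} = t^{m+k} = yx$ by associativity and commutativity of addition of exponents, so the ``if'' part needs no further comment. Note also that the Periodicity lemma (Lemma~\ref{perlem}) cannot be used to establish the nontrivial direction without risking circularity, since its own conclusion is a commutation statement; I would therefore argue directly.

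For the forward direction, I would assume $xy = yx$ and proceed by strong induction on $\abs x + \abs y$, taking without loss of generality $\abs y \leq \abs x$ (the other case being symmetric). The base case is $y = \varepsilon$: then $x = x^1$ and $y = x^0$, so $t = x$ works, and if moreover $x = \varepsilon$ we take $t = \varepsilon$ with $k = m = 0$. In the inductive step $y$ is nonempty. Comparing the prefixes of length $\abs y$ of the two sides of $xy = yx$ and using $\abs y \leq \abs x$, one sees that $y$ is a prefix of $x$; write $x = y x'$. Substituting into $xy = yx$ gives $y x' y = y y x'$, and cancelling the common prefix $y$ yields $x' y = y x'$. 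Since $\abs{x'} + \abs y < \abs x + \abs y$, the induction hypothesis supplies a word $t$ and integers $a, b \geq 0$ with $x' = t^a$ and $y = t^b$. Then $x = y x' = t^b t^a = t^{a+b}$, so both $x$ and $y$ are powers of $t$, closing the induction.

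The hard part is not the main reduction, which is entirely routine, but the corner-case bookkeeping that a formalization must pin down exactly: handling the empty word, ensuring the cancellation $yu = yv \Rightarrow u = v$ is applied to genuinely equal words, and tracking that the extracted exponents stay nonnegative. As an alternative I could invoke Lemma~\ref{conjugation} with $u = v = x$ and $z = y$, which returns $x = rq$, $x = qr$ and $y = (rq)^k r$; but this leaves the residual commutation $rq = qr$ to resolve and hence does not genuinely shortcut the argument, so the length induction remains the cleanest route.
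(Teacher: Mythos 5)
Your proof is correct. Note that the paper itself gives no proof of this lemma: it is stated in Appendix~\ref{sec:appendix} as a folklore background fact (``the second theorem of Lyndon and Sch\"utzenberger''), with only a pointer to its formalized counterpart \texttt{comm}, so there is no in-paper argument to compare against. Your strong induction on $\abs x + \abs y$ is the standard textbook proof and essentially the one used in formalizations of this fact: the reduction $xy = yx$, $\abs y \leq \abs x$, $y \neq \varepsilon$ $\Rightarrow$ $x = yx'$ with $x'y = yx'$ is sound (the WLOG is legitimate at each stage because the statement is symmetric under $x \leftrightarrow y$, $k \leftrightarrow m$, and strong induction on the sum absorbs the possible role swap after the reduction), the measure strictly decreases since $\abs y \geq 1$, and left cancellation is valid in the free monoid. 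Your two asides are also well judged: Lemma~\ref{perlem} as stated concludes only $uv = vu$, not a common-root representation, so it genuinely cannot substitute for the forward direction here; and the route via Lemma~\ref{conjugation} with $u = v = x$, $z = y$ indeed leaves the residual commutation $rq = qr$ unresolved, so it offers no shortcut over the induction.
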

Since every nonempty word has a (unique) primitive root, the word $t$ above can be chosen primitive ($k$ or $m$ can be chosen $0$ if $x$ or $y$ is empty). 

We mention that the given characterizations of conjugation and commutation (in a slightly expanded form) are called ``the first theorem of Lyndon and Sch\"utzenberger'' and 
``the second theorem of Lyndon and Sch\"utzenberger'' respectively in \cite{AlloucheShallit}, while we reserve the term ``the theorem of Lyndon and Sch\"u\-tzen\-ber\-ger'' to the following fact often used in the paper:
\begin{theorem}[\texttt{Lyndon\_Schutzenberger}] \label{thm:LS}
	If $x^jy^k = z^\ell$ with $j \geq 2$, $k \geq 2$ and $\ell \geq 2$, then the words $x$, $y$ and $z$ commute.
\end{theorem}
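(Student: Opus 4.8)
The plan is to derive the commutation of $x$ and $y$ from the Periodicity lemma (Lemma~\ref{perlem}); once $xy = yx$ is known, the equality $x^jy^k = z^\ell$ expresses $z^\ell$ as a power of the common primitive root of $x$ and $y$, whence $z$ commutes with them as well by Lemma~\ref{commutation}. I would first clear away degenerate cases: if any of $x,y,z$ is empty the claim is immediate, and by reversing the whole equation (which turns $x^jy^k=z^\ell$ into $\tilde y^k\tilde x^j=\tilde z^\ell$, of the same shape) I may assume $|x|\le|y|$. I would also replace $z$ by its primitive root, adjusting $\ell$ upward, so that $z$ is primitive.

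Next I would dispose of the cases where one of the two powers is long compared with $z$. If $|x|\ge|z|$, then since $j\ge 2$ the prefix $x^2$ of $z^\ell$ has length $\ge 2|z|$ and therefore carries both the period $|x|$ and the period $|z|$, with $|x|+|z|-\gcd(|x|,|z|)\le 2|x|=|x^2|$; Lemma~\ref{perlem} yields $xz=zx$, so $x$ is a power of the primitive word $z$, and substituting back forces $y$ to be a power of $z$ too. The symmetric argument on the suffix $y^k$ (via reversal) handles $|y|\ge|z|$. Hence I may assume $|x|<|z|$ and $|y|<|z|$, which together with $\ell|z|=j|x|+k|y|$ forces $\ell<j+k$.

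The remaining balanced case $|x|,|y|<|z|$ is the heart of the matter. Here I would analyse how the $z$-boundaries (multiples of $|z|$) interleave with the boundaries of the blocks of $x^j$ and $y^k$. Whenever a power overshoots a $z$-boundary by at least a full period --- for instance $(j-1)|x|\ge|z|$, so that $x^j$ contains $z$ as a prefix and still has length $\ge|z|+|x|$ --- the word $x^j$ again carries the two periods $|x|$ and $|z|$ within the Periodicity-lemma bound and I get $xz=zx$, and symmetrically on the $y$ side. The genuinely hard configurations are those where each power overshoots its nearest $z$-boundary only slightly, so that neither $x^j$ nor $y^k$ alone is long enough for a single application of Lemma~\ref{perlem}.

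I expect this last configuration to be the main obstacle. The idea for it is to work at the junction at position $j|x|$ where $x^j$ meets $y^k$: combining the global period $|z|$ with the local period $|x|$ to its left and $|y|$ to its right pins down the overlap precisely, and the conjugation equalities of Lemma~\ref{conjugation} let me identify the small overshoot with an explicit conjugate of $z$. Pushing this through should produce a factor that simultaneously carries periods forcing $x$ and $y$ into a common root, contradicting the assumed non-commutation. It is exactly here that the bookkeeping is delicate and where a formalization pays off, since the configuration splits into several length sub-cases that are easy to conflate or to overlook in a hand proof.
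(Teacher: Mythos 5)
The paper itself does not prove Theorem~\ref{thm:LS}: it is quoted in the appendix as a background fact, imported from the seminal paper of Lyndon and Sch\"utzenberger and from the pre-existing formalization in the backbone session CoW, so your proposal can only be measured against the standard argument that such formalizations follow. Your reductions match that standard argument and are correct as far as they go: the empty-word cases, the mirror symmetry allowing $|x|\leq|y|$, passing to the primitive root of $z$, and the applications of Lemma~\ref{perlem} when $|x|\geq|z|$, when $|y|\geq|z|$, and more generally when $(j-1)|x|\geq|z|$ or $(k-1)|y|\geq|z|$ --- each correctly yields $xz=zx$ (resp.\ $zy=yz$) and then commutation of all three words via uniqueness of primitive roots.

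The gap is the remaining case, $(j-1)|x|<|z|$ and $(k-1)|y|<|z|$ (which forces $\ell|z|<4|z|$, so essentially $\ell=2$): this is the entire content of the theorem, and your proposal replaces it with the expectation that the junction analysis ``should produce a factor that simultaneously carries periods forcing $x$ and $y$ into a common root.'' That plan cannot succeed as stated, because in the tight configuration no factor satisfying the Periodicity-lemma bound exists --- that is exactly what ``small overshoot'' means --- so no bookkeeping of the periods $|x|$, $|y|$, $|z|$ alone closes the case; the theorem is genuinely stronger than Fine--Wilf here. What the known proofs actually use at this point, and what is absent from your outline, is the primitivity of $z$ via the synchronization principle stated in Appendix~\ref{sec:appendix} (a primitive word is not a proper factor of its own square): after writing $z=x^{j-1}u$ with $x=uv$ as in your junction analysis (Lemma~\ref{conjugation}), one cancels to an equation of the shape $v\,y^k=(uv)^{j-1}u$ and rules out the shifted occurrence of the primitive $z$ inside $z^2$, with a further split according to where the center of $z^2$ falls and whether $k=2$ or $k\geq 3$. (A small symptom of the same looseness: your final step claims a ``contradiction with the assumed non-commutation,'' but your argument is direct and never set up such an assumption.) As it stands, the proposal proves only the easy half of the theorem.
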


A crucial property of a primitive word $t$ is that it cannot be a nontrivial factor of its own square. For a general word $u$, the equality $u\cdot u = p \cdot u \cdot s$ with nonempty $p$ and $s$ implies that all three words $p$, $s$, $u$ commute, that is, have a common primitive root $t$. This can be seen by writing $u = t^k$, and noticing that the presence of a nontrivial factor $u$ inside $uu$ can be obtained exclusively by a shift by several $t$'s. This is the ``synchronization'' idea mentioned in the paper.

One of the typical applications of this idea is the fact that if $w$ has a periodic root $r$, and at the same time $r$ is a suffix of $w$, then $w$ and $r$ commute. Another application is the proof of the fact that $qrqr$ cannot be a factor of a power of  $qrrq$ unless $q$ and $r$ commute, used in the proof of Lemma \ref{le:jk_unique}. This can be seen as follows. If $qrqr$ is a factor of $(qrrq)^\omega$, then $qrrq \sim qrqr$ since the words are of the same length. This means that also $qrrq$ is a factor of $(qr)^\omega$. Synchronizing the prefix $qr$ of $qrrq$ within $(qr)^\omega$ we obtain that the rest of the word, the word $rq$, is a prefix of $qr$, and the conclusion follows.
\medskip

Let $x$ and $y$ be two words that do not commute.
The longest common prefix of $xy$ and $yx$ is denoted $\alpha$. Let $c_x$ and $c_y$ be the letter following $\alpha$ in $xy$ and $yx$ respectively. A crucial property of $\alpha$ is that it is a prefix of any sufficiently long word in $\gen{\{x,y\}}$. Moreover, if $\ws = [u_1,u_2,\ldots,u_n] \in \lists \{x,y\}$ is such that $\concat \ws$ is longer than $\alpha$, then $\alpha\cdot [c_x]$ is a prefix of $\concat \ws$ if $u_1 = x$ and  
 $\alpha\cdot [c_y]$ is a prefix of $\concat \ws$ if $u_1 = y$. That is why the length of $\alpha$ is sometimes called ``the decoding delay'' of the binary code $\{x,y\}$. Note that the property indeed in particular implies that $\{x,y\}$ is a code, that is, it does not satisfy any nontrivial relation. It is also behind our method \verb|mismatch|. Finally, using this property, the proof of Lemma \ref{lem:lcp} is straightforward.  

\end{document}